\newcommand{\congest}{CONGEST\xspace}
\newcommand{\whp}{w.h.p.\xspace}
\newcommand{\CC}[3]{\ensuremath{CC_{#3}(#1,#2)}}
\newcommand{\nil}{\textsc{Nil}}
\newcommand{\hide}[1]{}
\newcommand{\lineComment}[1]{\LComment{ #1}}
\newcommand{\rightComment}[1]{\Comment{ #1}}
\begin{document}

\newtheorem{theorem}{Theorem}
\newtheorem{observation}[theorem]{Observation}
\newtheorem{corollary}[theorem]{Corollary}
\newtheorem{lemma}[theorem]{Lemma}
\newtheorem{definition}{Definition}

\theoremstyle{remark}
\newtheorem{note}{Note}
\newtheorem*{problem}{Problem}

\title{Distributed Distance Sensitivity Oracles\footnote{Authors' affiliation: Department of Computer Science, The University of Texas at Austin, Austin, TX, USA; email:  {\tt vigneshm@cs.utexas.edu, vlr@cs.utexas.edu}. This work was supported in part by NSF grant CCF-2008241.}
\author {Vignesh Manoharan  and Vijaya Ramachandran
}
}
\date{}

\maketitle

\begin{abstract}
 We present results for the distance sensitivity oracle (DSO) problem, where one needs to preprocess a given directed weighted graph $G=(V,E)$ in order to answer queries about the shortest path distance in $G$ from vertex $s$ to vertex  $t$ avoiding edge $e$, for any $s,t \in V, e \in E$. 
    DSO enables optimal re-routing under a link failure, and can serve as a key component for fault tolerance in a distributed setting. 
     However, no non-trivial results for DSO are known in the distributed CONGEST model.
     
    We present DSO algorithms with different tradeoffs between preprocessing and query cost: one that optimizes query response rounds, and another that prioritizes preprocessing rounds. We complement these algorithms with unconditional CONGEST lower bounds for DSO. Our DSO lower bounds build on a lower bound we present for the $k$-source shortest paths problem ($k$-SSP), which may be of independent interest. Additionally, we present almost-optimal upper and lower bounds for the related all pairs second simple shortest path (2-APSiSP) problem. 
\end{abstract}

\section{Introduction}

In distributed networks, maintaining communication in the event of a link failure is crucial for fault-tolerance. We model a distributed communication network as a graph $G=(V,E)$, and investigate the problem of computing shortest path distance when an arbitrary edge fails. Formally, the Distance Sensitivity Oracle (DSO) problem requires preprocessing $G$ to build an oracle that can quickly answer queries of the form $d(x,y,e)$ for $x,y\in V, e\in E$: here, $d(x,y,e)$ is the shortest path distance from $x$ to $y$ when edge $e$ is removed from $G$. In the special case when $x$ and $y$ are fixed vertices and the $d(x,y,e)$ values are computed explicitly, this is known as the Replacement Paths (RPaths) problem.

Let $|V|=n, |E|=m$. A naive DSO algorithm to answer a query $d(x,y,e)$ would be to simply compute SSSP (Single Source Shortest Paths) from $x$ in the $G - \{e\}$, i.e., $G$ with edge $e$ removed. However, by first preprocessing the graph and storing certain distance information, we can answer queries much faster. 
In an extreme case, 
one could precompute and store all possible replacement path distances. However, this would require high preprocessing cost and oracle size to compute all $n^2m$ distances $d(x,y,e)$ (this can be reduced to $O(n^3)$ by restricting to $e$ on $x$-$y$ shortest path). The primary goal in designing DSO algorithms is to find good tradeoffs between preprocessing cost, query cost, and the oracle size.

In the sequential setting, the first algorithms for constructing a DSO were given in~\cite{DemetrescuTCR08dso}. Subsequent improvements~\cite{BernsteinK08dso,BernsteinK09dso} led to an algorithm that preprocesses a graph in $\tilde{O}(mn)$\footnote{We use the notation $\tilde{O}, \tilde{\Omega}, \tilde{\Theta}$ to hide poly-logarithmic factors in our costs.} time to construct an oracle of size $\tilde{O}(n^2)$ that can answer any query in $O(1)$ time.

 The DSO problem is particularly relevant for fault tolerance in the distributed setting, as it models re-establishing efficient communication after link failure occurs.
Despite this relevance, no non-trivial results for DSO were known in the distributed CONGEST model prior to our work. In this paper, we present the first algorithms and lower bounds for DSO tailored to the distributed CONGEST model. 

We introduce a batched DSO model for the distributed setting in which we preprocess an input graph to answer $k \ge 1$ queries of the form $d(x_i, y_i, e_i)$ for $x_i, y_i \in V, e_i \in E$. This generalizes the sequential DSO setup which has $k=1$, i.e., where queries are answered one at a time. The query cost is defined as the total time to answer all queries, as a function of $k$. At the end of the query algorithm, the distance $d(x_i, y_i, e_i)$ must be stored at $y_i$. Note that despite multiple queries, only one edge fails at a time, i.e., the graph is reset to its original state for any subsequent query.

Depending on the application, we consider two natural models for DSOs with batched queries. In the first model, all edges $e_i$ are the same across the $k$ queries, which models applications where we want to compute distances between different vertex pairs when a failure occurs; we denote this the \textit{single-edge batched DSO} or \textit{seb-DSO} for short. In the second model, the edges $e_i$ may be different across queries, and this models applications where we want to compute distances under distinct, independent edge failures; we denote this more general model as the \textit{general batched DSO} or simply refer to it as \textit{DSO} in the rest of the paper.

As noted earlier, a naive algorithm that performs no preprocessing can answer $k$ arbitrary queries in $O(k\cdot SSSP)$ rounds ($SSSP$ denotes the CONGEST round complexity of computing single source shortest paths). We present two CONGEST DSO algorithms that utilize preprocessing to beat this query cost: Our first algorithm optimizes query cost, requiring only $O(k+D)$ rounds to answer $k$ queries, and uses $\tilde{O}(n^{3/2})$ rounds for preprocessing. Our second algorithm prioritizes preprocessing cost, achieving $\tilde{O}(n)$ preprocessing rounds and answering $k$ queries in $\tilde{O}(k\sqrt{n}+D)$ rounds. Both algorithms achieve $\tilde{O}(n^2)$ oracle size, matching the size of the best $O(1)$-query sequential DSO~\cite{BernsteinK09dso}. 
In fact, both of our algorithms store only $\tilde{O}(n)$ pre-computed distances at each node. Both algorithms work for the general-DSO queries.

For the more specialized seb-DSO problem, we beat the naive $O(k\cdot SSSP)$ query cost even without preprocessing. We achieve this using a CONGEST algorithm for $k$-Source Shortest Paths ($k$-SSP) problem~\cite{ManoharanR24mwc}. In the $k$-SSP problem we are given a graph $G=(V,E)$ and a set of $k$ vertices $U \subseteq V$, and need to compute shortest path distances $d(u,v)$ for each $u \in U, v \in V$. This is a fundamental graph problem, and generalizes the important SSSP and APSP problems. In~\cite{ManoharanR24mwc}, it is shown that $k$-SSP can be computed faster than $O(k\cdot SSSP)$, and that result smoothly interpolates between the current best SSSP and APSP algorithms. We show in this paper that the round complexity of answering $k$ seb-DSO queries without preprocessing is equivalent to that of $k$-SSP, with both upper and lower bounds (discussed below). This gives us a $\tilde{\Theta}(\sqrt{nk} + D)$ round bound for $k \ge n^{1/3}$ that is nearly optimal, and a more involved expression for smaller $k$ where the current best upper and lower bounds do not match.

To complement our DSO algorithms, we present unconditional CONGEST lower bounds for computing answers to $k$ DSO queries. These lower bounds apply even for seb-DSO, in directed graphs (both weighted and unweighted) and undirected weighted graphs.
Our approach is to first establish a new CONGEST lower bound for the $k$-SSP problem, which is of independent interest.

We prove that computing $k$-SSP in directed unweighted graphs or in undirected weighted graphs requires $\tilde{\Omega}(\sqrt{nk}+D)$ rounds in CONGEST. This lower bound is nearly tight for $k \ge n^{1/3}$ in directed graphs, given the directed $k$-SSP algorithms in~\cite{ManoharanR24mwc} (which compute exact $k$-SSP in unweighted graphs and $(1+\epsilon)$-approximate $k$-SSP in weighted graphs). For undirected weighted graphs, the bound is almost tight for approximate $k$-SSP for all $k$, given the $(1+\epsilon)$-approximation $k$-SSP algorithm of~\cite{ElkinN19} that takes ${O}\left((\sqrt{nk}+D) \cdot n^{o(1)}\right)$ rounds. Note that undirected unweighted $k$-SSP can be solved in $O(k+D)$ rounds~\cite{LenzenPP19,HoangPDGYPR19}, which is readily seen to be optimal. Our result appears to be the first CONGEST lower bound between $n$ and $\sqrt{n}$ rounds for a shortest path problem.

Finally, we address the related All-Pairs Second Simple Shortest Paths (2-APSiSP) problem. For a pair of vertices $x,y \in V$, the Second Simple Shortest Path (2-SiSP) distance $d_2(x,y)$ is the minimum weight of a simple $x$-$y$ path that differs from an $x$-$y$ shortest path~\cite{Yen1971}. In the 2-APSiSP problem, we need to compute $d_2(x,y)$ for all pairs $x,y \in V$. This problem was studied in the sequential setting in~\cite{AgarwalR16sisp}. In this paper, we present the first nontrivial CONGEST algorithm to compute directed weighted 2-APSiSP. Our algorithm takes $\tilde{O}(n)$ rounds, and we show this is nearly optimal by presenting a matching lower bound even in undirected unweighted graphs. This lower bound continues to hold when APSP distances in the original graph are known beforehand, and even when the graph has constant diameter. 

A core procedure used in sequential DSO and 2-APSiSP algorithms~\cite{DemetrescuTCR08dso,BernsteinK09dso,AgarwalR16sisp} is that of computing excluded shortest paths distances~\cite{DemetrescuTCR08dso}, i.e., distances when certain types of paths are avoided one at a time. We present a distributed CONGEST procedure for this computation which we use in our 2-APSiSP algorithm and our first DSO algorithm.

\section{Preliminaries}
\label{sec:prelim}

\subsection{The CONGEST Model}\label{congest-model}
In the CONGEST model~\cite{Peleg2000book}, a communication network is represented by a graph $G=(V,E)$ where nodes model processors and edges model bounded-bandwidth communication links between processors. Each node has a unique identifier in $\{0, 1, \dots n-1\}$ where $n = |V|$. Initially, each node only knows the identifiers of itself and its neighbors in the network. Each node has infinite computational power. The nodes perform computation in synchronous rounds, where each node can send a message of up to $\Theta(\log n)$ bits to each neighbor and can receive the messages sent to it by its neighbors. The complexity of an algorithm is measured by the number of rounds until the algorithm terminates. 

We mainly consider directed weighted graphs $G$ in this paper, where each edge has an integer weight known to the vertices incident to the edge. Following the convention for CONGEST algorithms~\cite{Nanongkai14,CaoF23parallel}, the communication links are always bi-directional and unweighted. The undirected diameter of the network, which we denote by $D$, is an important parameter in the CONGEST model.

\subsection{Notation and Terminology}
\label{sec:notation}

We will deal primarily with a directed weighted graph $G=(V,E)$. Let $|V|=n$. Each edge $(x,y) \in E$ (for $x,y \in V$) has a non-negative integer weight $w(x,y) \le W$, with the maximum weight $W$ bounded by a polynomial in $n$. We denote the shortest path distance from $x$ to $y$ by $d(x,y)$, and a shortest path from $x$ to $y$ by $P_{xy}$. We use $G^r=(V,E^r)$ to denote the reversed graph of $G$, where each edge is oriented in the opposite direction. The undirected diameter $D$ of $G$ (and $G^r$), is the maximum shortest path distance between two vertices in the underlying undirected unweighted graph of $G$. 

Let $G-\{e\}$ be the graph $G$ with edge $e$ removed.
We use $d(x,y,e)$, for $x,y \in V, e \in E$, to denote the shortest path distance from $x$ to $y$ in the graph $G-\{e\}$, and we refer to this as the replacement path distance 
from $x$ to $y$ avoiding failed edge $e$.

In this context, we refer to $x$ as source, $y$ as target or sink, and $e$ as failed edge.
We generalize this definition for a path $P$, and $d(x,y,P)$ denotes the shortest path distance from $x$ to $y$ in $G-E(P)$.
We use $d_2(x,y)$ to denote the second simple shortest path distance from $x$ to $y$ (2-SiSP distance), which is the minimum distance of a simple $x$-$y$ path that differs from an $x$-$y$ shortest path $P_{xy}$ by at least one edge. Note that $d_2(x,y) = \min_{e \in P_{xy}} d(x,y,e)$. The distance $d_2(x,y)$ is independent of the chosen shortest path $P_{xy}$, and $d_2(x,y)=d(x,y)$ if there are multiple $x$-$y$ shortest paths.

\begin{definition}[Shortest path trees and independent paths]
We denote the out-shortest path tree rooted at a vertex $x \in V$ by $T_x$, and we refer to $x$ as its source. For a path $P$ in $T_x$, let $(u,v)$ be the first edge of $P$, so $u$ is the closest vertex to source $x$. We define the second vertex $v$ to be the \textit{key} vertex of $P$. We define $T_x(P)$ to be the subtree of $T_x$ rooted at the key vertex of $P$.
    
A set of paths $\mathcal{R}$ is called \textit{independent} with respect to source $x$
if each path $P \in \mathcal{R}$ is a subpath of $T_x$, and for any pair of distinct paths $P,P' \in \mathcal{R}$, the subtrees $T_x(P)$ and $T_x(P')$ are disjoint.
\label{def:indpaths}
\end{definition}

\subsection{Problem Definitions}
\label{sec:problemdefs} In the following definitions, $G=(V,E)$ is a directed weighted graph. 

\begin{definition}[Excluded Shortest Paths Problem]
    Given a set of sources $X \subseteq V$ and sets of independent paths $\mathcal{R}_x$ for each $x \in X$ (as defined in Section~\ref{sec:notation}), we need to compute $d(x,y,P)$ for each $x \in X, y \in V$ and $P \in \mathcal{R}_x$, i.e., the $x$-$y$ shortest path distance when path $P$ is removed. In algorithms for the \congest model, the input set of paths $\mathcal{R}_x$ is given as follows: Each edge $e\in E $ that belongs to some path $P \in \mathcal{R}_x$ knows the key vertex of $P$ (second vertex of $P$ from $x$).
    For the output, each distance $d(x,y,P)$ must be known at node $y$.
    \label{def:exclude}
\end{definition}

\begin{restatable}[Distance Sensitivity Oracles Problem (DSO)]{definition}{defdso}
In the DSO problem we need to compute $d(x,y,e)$, the $x$-$y$ shortest path distance when $e$ is removed, for any pair of vertices $x,y \in V$ and any $e \in E$. A DSO algorithm is allowed to perform some preprocessing on the graph $G$ to construct on oracle, which then answers queries of the form $d(x,y,e)$. In algorithms for the \congest model, we assume that at query time the query $d(x,y,e)$ is announced to a vertex incident to $e$. After the query is answered, distance $d(x,y,e)$ must be known at node $y$. \label{def:dso}
\end{restatable}

In sequential DSO algorithms, after preprocessing the input graph, queries are typically answered one at a time as in Definition~\ref{def:dso}. However, in a distributed DSO algorithm the oracle may be stored in a distributed manner across nodes and answering a query is no longer a single local lookup: up to $D$ rounds of communication may be needed to answer a single query, where $D$ is the undirected diameter of the CONGEST network. 

To alleviate the cost of $D$ per query, we introduce a batched DSO model where we answer $k > 1$ queries simultaneously. This allows us to take advantage of pipelining to improve total communication cost, thereby avoiding $\Omega(kD)$ rounds for $k$ queries. 
Batched DSO strictly generalizes Definition~\ref{def:dso} where $k=1$, and we address two models of batched DSO in this paper, as described in the following two definitions.

\begin{definition}[Single-Edge Batched DSO (seb-DSO)]
    The DSO algorithm must respond to $k$ queries $d(x_i,y_i,e)$, $1 \le  i \le k$ for $x_i,y_i \in V$, $e \in E$, where the failed edge $e$ is the same across all queries.
    \label{def:sebdso}
\end{definition}

\begin{definition}[General Batched DSO]
    The DSO algorithm must answer $k$ queries $d(x_i,y_i,e_i)$, $1 \le  i \le k$ for $x_i,y_i \in V$, $e_i \in E$, where each failed edge $e_i$ may be different. We simply use DSO to refer to this model in this paper.
    \label{def:generaldso}
\end{definition}

The following $k$-SSP problem  features in some of our algorithms as well as in our DSO lower bound. It was studied earlier in~\cite{ManoharanR24mwc}. 

\begin{definition}[$k$-Source Shortest Paths Problem ($k$-SSP)]
    Given a set of $k$ vertices $U \subseteq V$, compute the shortest path distances $d(u,v)$ for each $u \in U, v \in V$. In algorithms for the \congest model, each distance $d(u,v)$ must be known at node $v$.
    \label{def:kssp}
\end{definition}

The 2-APSiSP problem, defined below,  generalizes 2-SiSP~\cite{Yen1971}, and was studied earlier in~\cite{AgarwalR16sisp}. The 2-APSiSP problem is related to DSO and in this paper we present nearly matching upper and lower bounds for 2-APSiSP.

\begin{definition}[All Pairs Second Simple Shortest Path Problem (2-APSiSP)]
    For a pair of vertices $x,y \in V$, the Second Simple Shortest Path (2-SiSP) distance $d_2(x,y)$ is the minimum weight of a simple $x$-$y$ path that differs from an $x$-$y$ shortest path.\\
    In the 2-APSiSP problem we need to compute $d_2(x,y)$ for all pairs of vertices $x,y \in V$. In algorithms for the \congest model, each distance $d_2(x,y)$ ($\forall x \in V$) must be known at node $y$.
\end{definition}

\subsection{CONGEST Primitives}
\label{sec:primitives}
We state some basic CONGEST primitives used in our algorithms.

In an unweighted graph, a breadth-first search (BFS) from $k$ source vertices up to $h$ hops takes $O(k+h)$ rounds~\cite{LenzenPP19,HoangPDGYPR19}. We can broadcast $m$ messages ($O(\log n)$ bits each) to all other vertices in $O(m+D)$ rounds~\cite{Peleg2000book}. Given a rooted tree $T$ of height $t$ in the CONGEST network (where each node in the tree knows its incident edges), we can send $m$ messages ($O(\log n)$ bits each) down from root to all nodes in $T$ in $O(m+t)$ rounds -- this operation is called a downcast~\cite{Peleg2000book}, and we can even perform an associative operation on the received values at each node before propagating the information. A similar upcast can be done from the leaves to the root. A convergecast is an upcast along a broadcast tree that contains all vertices, which has height $D$, and thus takes $O(m+D)$ rounds~\cite{Peleg2000book}. 

\paragraph{Scheduling with random delays} Following~\cite{Ghaffari15scheduling,LeightonMR99}, we use scheduling with random delays in order to efficiently perform multiple computations on the network. Let the \textit{congestion} of a CONGEST algorithm be the maximum number of messages it sends through any single edge, over all rounds of the algorithm.
Consider an algorithm that runs in $R$ rounds and has congestion $C$. Then, we can run $k$ instances of the algorithm in $O(kC+R\log n)$ rounds, with total congestion $O(kC)$ over all instances.

\paragraph{$SSSP$ and $APSP$} We use $SSSP$ and $APSP$ to denote the round complexity in the CONGEST model for weighted single source shortest paths {\rm (SSSP)} and weighted all pairs shortest paths {\rm (APSP)} respectively. 
The current best algorithm for weighted {\rm APSP} runs in $\tilde{O}(n)$ rounds, randomized~\cite{BernsteinN19apsp}.
For unweighted graphs, $O(n)$ round deterministic APSP algorithms are known~\cite{HolzerW12,HoangPDGYPR19}. For weighted graphs, the current best deterministic APSP algorithm takes $\tilde{O}(n^{4/3})$~rounds~\cite{AgarwalR20apsp}.
Weighted {\rm SSSP} can be computed in  $\tilde{O}(\sqrt{n} + n^{2/5+o(1)}D^{2/5}+D)$ rounds by a  randomized algorithm~\cite{CaoF23parallel}. The current best lower bounds are $\Omega\left(\frac{\sqrt{n}}{\log n} + D\right)$ for weighted {\rm SSSP}~\cite{SarmaHKKNPPW12} and $\Omega\left(\frac{n}{\log n}\right)$ for (weighted and unweighted) {\rm APSP}~\cite{Nanongkai14}. In our algorithms, we also use a deterministic low-congestion SSSP procedure presented in~\cite{GhaffariT24sssp}, which takes $\tilde{O}(n)$ rounds but incurs only $\tilde{O}(1)$ congestion per edge. 
When computing shortest path distances from a source $x$, we assume that each vertex $v$ knows its parent on a shortest path from $x$ to $v$. This parent information can be computed in $O(1)$ additional rounds per source by having each vertex share its distance from $x$ with its neighbors. A vertex $v$ then identifies an in-neighbor $u$ as its parent if the condition $d(x,v) = d(x,u)+w(u,v)$ holds.

\section{Summary of Results}
\label{sec:dso:results}

In Table~\ref{tab:dso:results} we summarize our results for DSO (i.e., general-batch DSO) in the CONGEST model for a directed weighted graph $G=(V,E)$
with integer weights. Recall that $n=|V|$, and $D$ is the undirected diameter of $G$. Integer weights are needed in our algorithms only for computing SSSP using the algorithm in~\cite{GhaffariT24sssp}, otherwise our algorithms work for real weights, provided that $O(1)$ weights or $O(\log n)$ bits can be communicated in a single round along an edge.
All of  our algorithms are randomized and are correct \whp in $n$ (with probability $1-(1/poly(n))$).

\begin{table}[ht]
    \centering
        \begin{tblr}{hlines, vlines, columns={c}, column{1}={3cm}, column{2}={2.5cm}, column{3}={2.3cm}, column{4} = {2cm}, cells={m}}
            \textbf{CONGEST Result} & \textbf{Rounds to answer $k\ge 1$ queries} & \textbf{Preprocessing rounds} & \textbf{Space per node} \\ 
            {\textit{Naive algorithm}} & {$O(k \cdot SSSP)$} & {0} & {0} \\
            {\textit{Algorithm with fast \underline{query responses}} \\ (Theorem~\ref{thm:dso:algquery})} & {$O(k+D)$}  & {$\tilde{O}(n^{3/2})$} & {$\tilde{O}(n)$}  \\
            {\textit{Algorithm with fast \underline{preprocessing}} \\ (Theorem~\ref{thm:dso:algpre})} & {$\tilde{O}(k\sqrt{n}+D)$}  & {$\tilde{O}(n)$} & {$\tilde{O}(n)$}  \\
            {\textit{{Lower bound}, for any} $Q=o\left(\frac{\sqrt{n}}{\log n}\right)$ \\ (Theorem~\ref{thm:dso:prelb})} & {$O(k \cdot Q + D)$}  & {$\tilde{\Omega}\left(\frac{n}{Q}\right)$} & {-} \\
        \end{tblr}
        \caption{Summary of \congest DSO results (i.e. general-batch DSO) for a directed weighted graph $G=(V,E)$. $n=|V|$, and $D$ is the undirected diameter of $G$.}
        \label{tab:dso:results}
\end{table}

\subsection{Excluded Shortest Paths}
Our first DSO algorithm relies on a subroutine to efficiently compute shortest path distances from multiple sources when certain parts of shortest path trees are excluded, one at a time. We present an algorithm for the excluded shortest paths problem (see Definition~\ref{def:exclude}). 
Our CONGEST implementation is inspired by a sequential algorithm for a single source in~\cite{DemetrescuTCR08dso}, and we extend the result to multiple sources by using low congestion SSSP combined with efficient random scheduling. For brevity, we refer to this procedure as an exclude computation, which we present in detail in Section~\ref{sec:exclude}.

\begin{restatable}{theorem}{thmdsoexclude}
Given a set of sources $X \subseteq V$ and an independent set of paths $\mathcal{R}_x$ for each source $x \in X$, we can compute $d(x,y,P)$ for each $x \in X, y\in V, P\in R_x$ in $\tilde{O}(n)$ rounds. Additionally, the maximum congestion is $\tilde{O}(|X|)$. \label{thm:dso:exclude}
\end{restatable}

\subsection{Distance Sensitivity Oracles (DSOs)}

We summarize our DSO results here: Section~\ref{sec:dso:gendsointro} states our results for our general batched DSO algorithms, Section~\ref{sec:dso:sebintro} for our seb-DSO algorithms and Section~\ref{sec:results:lb} states our DSO lower bounds, including our $k$-SSP lower bound.

Our DSO algorithms focus on computing the distance value of a given query $d(x,y,e)$, but we can readily augment our algorithms so that each vertex on the replacement path from $x$ to $y$ avoiding $e$ knows the next vertex on that path.

\subsubsection{General Batched DSO Algorithms}
\label{sec:dso:gendsointro}

\paragraph{DSO with fast query responses}
Our first DSO optimizes query time, answering a batch of $k$ arbitrary queries $d(x,y,e)$ in $O(k + D)$ rounds. The algorithm performs exclude computations during preprocessing to store certain excluded distances at different vertices, such that we can combine appropriate distances to compute any $d(x,y,e)$. Once the query is known, we broadcast the relevant precomputed distances and compute $d(x,y,e)$ at $y$. To answer a batch of $k$ different queries, we pipeline the broadcasts for each query, answering them all in $O(k+D)$ rounds. 

A direct distributed implementation of a sequential algorithm from~\cite{DemetrescuTCR08dso} uses $\tilde{O}(n^{3/2})$ rounds for preprocessing and answers $k$ queries in $O(k+D)$ rounds, but requires an oracle size of $\tilde{O}(n^{3/2})$ per node. This result appears in our conference paper~\cite{ManoharanR25dso}, 

In this paper, we achieve the same $\tilde{O}(n^{3/2})$-round preprocessing and $O(k+D)$-round query bounds with an improved $\tilde{O}(n)$ oracle size per node. The total oracle size of $\tilde{O}(n^2)$ across all nodes matches the oracle size of the best sequential algorithm~\cite{BernsteinK09dso}. This improved construction uses ideas from a sequential DSO in~\cite{BernsteinK08dso} and is more involved due to the need for communicating distances and other information across different nodes during preprocessing while maintaining $\tilde{O}(n^{3/2})$ preprocessing rounds. This algorithm is described in Section~\ref{sec:dso:algquery}.

\begin{restatable}{theorem}{thmdsoalgquery}
We can construct a DSO for directed weighted graphs that takes $\tilde{O}(n^{3/2})$ preprocessing rounds, has $\tilde{O}(n)$ oracle size per node, and answers a batch of any $k$ queries of the form $d(x,y,e)$ in $O(k + D)$ rounds. \label{thm:dso:algquery}
\end{restatable}

\paragraph{DSO with fast preprocessing}
We present a DSO construction that prioritizes preprocessing by taking only $\tilde{O}(n)$ rounds after which it can answer $k$ queries in $\tilde{O}(k\sqrt{n}+D)$ rounds. Note that we can answer $k$ queries $d(x,y,e)$ without any preprocessing with $k$ SSSP computations in $\tilde{O}(k\cdot(\sqrt{n}+n^{2/5}D^{2/5}+D))$ rounds~\cite{CaoF23parallel}, and we improve on this bound for all $k\ge 1$. The algorithm stores $\tilde{O}(n)$ words per node, for a total of $\tilde{O}(n^2)$ space, which is competitive with the sequential algorithm of~\cite{BernsteinK09dso}.
We describe this algorithm in Section~\ref{sec:dso:algpre}.

In order to reduce preprocessing rounds to $\tilde{O}(n)$, we cannot perform too many exclude computations. So, we handle short replacement paths of hop-length $\le \sqrt{n}$ at query time using a $\sqrt{n}$-hop Bellman-Ford procedure. For longer replacement paths ($> \sqrt{n}$ hops), we construct randomly sampled graphs where each edge is included with certain probability (inspired by a sequential DSO in~\cite{WeimannY13}). In these graphs, we perform SSSP computations only from certain sampled vertices during preprocessing. This allows us to quickly compute long replacement path distances at query time. 

\begin{restatable}{theorem}{thmdsoalgpre}
We can construct a DSO for directed weighted graphs that takes $\tilde{O}(n)$ preprocessing rounds, has $\tilde{O}(n)$ oracle size per node, and then answers a batch of any $k$ queries of the form $d(x,y,e)$ in $\tilde{O}(k\sqrt{n} + D)$ rounds. \label{thm:dso:algpre}
\end{restatable}

\subsubsection{Single Edge Batched DSO}
\label{sec:dso:sebintro}

The algorithms discussed above work for general batched DSO. We now consider the seb-DSO model. We observe that with no preprocessing, the query cost of answering $k$ seb-DSO queries is equivalent to the complexity of computing $k$-SSP, with matching upper and lower bounds. The upper bound in Observation~\ref{thm:dso:seb} follows by observing that seb-DSO for a batch of  $k\leq n$ queries reduces to $k$-SSP; 
we also show a corresponding directed lower bound (Corollary~\ref{lem:seblb} in Section~\ref{sec:dso:lb}) that follows from our main DSO lower bound in Theorem~\ref{thm:dso:lb}. A similar fairly tight relation between $k$-SSP and seb-DSO holds for undirected graphs, given $k$-SSP algorithms in~\cite{ElkinN19,LenzenPP19,HoangPDGYPR19} and our undirected DSO lower bound (Section~\ref{sec:dso:lbunw}).

\begin{restatable}{observation}{thmdsoseb}
    Given a graph $G=(V,E)$ and $k$ seb-DSO queries $d(x_i, y_i, e)$ for $1 \le i \le k$ (where the failed edge $e$ is the same), we can answer the queries with no preprocessing in the same round complexity as computing $k$-SSP. 
        
    Using a $k$-SSP algorithm in~\cite{ManoharanR24mwc} yields the following bounds:
        
        \begin{enumerate}[topsep=0cm,label=\arabic*.,ref=\arabic*]
            \item In a directed \textit{unweighted} graph the query bound is:
                $$
                    \begin{cases}
                        \tilde{O}(\sqrt{nk} + D) & , \text{ if } k \ge n^{1/3} \\
                        \min\left( \tilde{O}\left(\frac{n}{k}+D\right), k \cdot SSSP \right) & , \text{ if } k<n^{1/3} 
                    \end{cases}
                $$
            \item In a directed \textit{weighted} graph, we can obtain $(1+\epsilon)$-approximate answers (for any constant $\epsilon>0$) for $k$ seb-DSO queries in rounds :
                $$
                    \begin{cases}
                        \tilde{O}(\sqrt{nk} + D) & , \text{ if } k \ge n^{1/3} \\
                        \tilde{O}(\sqrt{nk} + k^{2/5}n^{2/5+o(1)}D^{2/5} + D) & , \text{ if } k<n^{1/3}
                    \end{cases} 
                $$
        \end{enumerate}
\label{thm:dso:seb}
\end{restatable}

\subsubsection{Lower Bounds}
\label{sec:results:lb}

We start with stating a $k$-SSP lower bound that is a starting point for our DSO lower bound; this $k$-SSP lower bound may be of independent interest. We then state our DSO lower bounds.

\paragraph{$k$-SSP Lower Bound}
We present a lower bound of $\Omega\left(\frac{\sqrt{nk}}{\log n} + D\right)$ for computing $k$-SSP in directed unweighted graphs (it trivially applies for directed weighted graphs as well) and in undirected weighted graphs. Our bound also holds against approximation algorithms. In fact, we prove a stronger result that computing distances between $k$ given pairs of vertices requires $\Omega\left(\frac{\sqrt{nk}}{\log n} + D\right)$ rounds.

\begin{restatable}{theorem}{thmkssplb}
    \begin{enumerate}[label=\Alph*.,ref=\Alph*]
        \item Given a directed unweighted graph $G=(V,E)$, computing $k$-SSP requires $\Omega\left(\frac{\sqrt{nk}}{\log n} + D\right)$ rounds, even if $G$ has undirected diameter $\Theta(\log n)$. This lower bound holds for any constant approximation algorithm. \label{thm:kssp:lb:dir}
        \item Given an undirected weighted graph $G=(V,E)$, computing $k$-SSP requires $\Omega\left(\frac{\sqrt{nk}}{\log n} + D\right)$ rounds, even if $G$ has undirected diameter $\Theta(\log n)$. This lower bound holds for any $(2-\epsilon)$-approximation algorithm. \label{thm:kssp:lb:undir}
    \end{enumerate}
\label{thm:kssp:lb}
\end{restatable}

This lower bound is nearly tight for directed graphs in many cases: When $k \ge n^{1/3}$, it matches the $k$-SSP algorithm bounds in~\cite{ManoharanR24mwc} for exact directed unweighted $k$-SSP and $(1+\epsilon)$-approximate directed weighted $k$-SSP. For $k < n^{1/3}$, the bound is not tight even for approximate distances, as our algorithm takes $\tilde{O}(\sqrt{nk}+n^{2/5+o(1)}k^{2/5}D^{2/5}+D)$ rounds. This is not unexpected as we have a gap between the current best upper and lower bounds even for SSSP ($k=1$) whether exact or approximate, i.e., $\tilde{O}(\sqrt{n} + n^{2/5 + o(1)}D^{2/5}+D)$~\cite{CaoF23parallel} and $\tilde{\Omega}(\sqrt{n}+D)$~\cite{SarmaHKKNPPW12}. 

Our undirected weighted lower bound is almost tight for $(1+\epsilon)$-approximate $k$-SSP all $k$, given the algorithm of~\cite{ElkinN19} that takes ${O}\left((\sqrt{nk}+D) \cdot n^{o(1)}\right)$ rounds. For undirected unweighted $k$-SSP, the previously known algorithms taking $O(k+D)$ rounds~\cite{LenzenPP19,HoangPDGYPR19} are readily seen to be optimal.

\paragraph{DSO Lower Bound}
We build on the $k$-SSP lower bound in Theorem~\ref{thm:kssp:lb}.\ref{thm:kssp:lb:dir} to obtain a lower bound of $\Omega\left(\frac{\sqrt{nk}}{\log n} + D\right)$ to answer $k$ DSO queries in a directed graph, even if unweighted.  This lower bound applies even for seb-DSO, i.e., when all $k$ queries have the same failed edge (and hence trivially applies to general-DSO as well).

As our $k$-SSP lower bound in Theorem~\ref{thm:kssp:lb} also holds for undirected weighted graphs and holds against approximation algorithms, our new DSO lower bounds inherit these same properties: they apply to directed graphs for any constant approximation, and to undirected weighted graphs for $(2-\epsilon)$-approximation. For simplicity, we state the following lower bounds just for exact directed DSO. Section~\ref{sec:dso:lbunw} discusses our DSO lower bounds for the undirected weighted case.

\begin{restatable}{theorem}{thmdsolb}
Consider a DSO algorithm for directed unweighted graph $G$ that takes $P(n)$ rounds for preprocessing and then answers a batch of $k \le n/2$ queries $d(a_i,b_i,e_i)$ for $1\le i \le k$ in $Q(n,k)$ (for $k \le n/2$) rounds. Then $P(n) + Q(n,k)$ is  $\Omega\left(\frac{\sqrt{nk}}{\log n}\right)$. This lower bound applies even for seb-DSO, i.e., when all $e_i$ are the same, and even if $G$ has undirected diameter $\Theta(\log n)$.
\label{thm:dso:lb}
\end{restatable}

We apply Theorem~\ref{thm:dso:lb} in two ways: First,  by setting $P(n)=0$, we obtain 
an almost optimal lower bound for seb-DSO without preprocessing, complementing the upper bound result stated earlier in Observation~\ref{thm:dso:seb} (see Corollary~\ref{lem:seblb} in Section~\ref{sec:dso:lb}). 
Second, we assume query costs similar to Theorem~\ref{thm:dso:algquery} and Theorem~\ref{thm:dso:algpre} to derive the preprocessing lower bounds that complement our upper bounds as stated below in Theorem~\ref{thm:dso:prelb2}.
We also obtain a more general lower bound on preprocessing rounds that covers the salient range of query response complexities; this is discussed in Section~\ref{sec:dso:lb} (Theorem~\ref{thm:dso:prelb}).

\begin{restatable}{theorem}{thmdsoprelbb}
    Consider a DSO algorithm $\mathcal{A}$ for directed unweighted graph $G=(V,E)$ on $n$ nodes. \label{thm:dso:prelb2}
    \begin{enumerate}[label=\Alph*.]
        \item If a DSO algorithm can answer $k=\frac{n}{\log ^3 n}$ queries in $O(k+D)$ rounds, it must use $\Omega(\frac{n}{\log^3 n})$ preprocessing rounds.
        \item If a DSO algorithm can answer $k=\log n$ queries in $O(k\frac{\sqrt{n}}{\log^2 n}+D)$ rounds, it must use $\Omega(\frac{\sqrt{n}}{\log^2 n})$ preprocessing rounds.
    \end{enumerate}
\end{restatable}

There still remains a $\sqrt{n}$-gap in the preprocessing cost between our lower bounds and our DSO algorithms, for both the case of $O(1)$ query and $O(\sqrt{n})$ query. 

\subsection{All Pairs Second Simple Shortest Paths (2-APSiSP)}
We present near-optimal upper and lower bounds for the 2-APSiSP problem.

\subsubsection{2-APSiSP Algorithm}
We present a 
\congest algorithm for 2-APSiSP that runs in $\tilde{O}(n)$ rounds.
Our approach adapts a characterization from a sequential algorithm~\cite{AgarwalR16sisp} to the distributed setting, leveraging our excluded shortest paths primitive from Theorem~\ref{thm:dso:exclude}.
The algorithm performs one exclude computation from each vertex, followed by non-trivial local computation in order to compute all 2-APSiSP distances. We present this result in Section~\ref{sec:apsisp:ub}

\begin{restatable}{theorem}{thmapsispub}
We can compute 2-APSiSP for directed weighted graphs in $\tilde{O}(n)$ rounds. \label{thm:apsisp:ub}
\end{restatable}

\subsubsection{2-APSiSP Lower Bound}
We present a lower bound of $\tilde{\Omega}(n)$ for computing 2-APSiSP in undirected unweighted graphs, which proves that our $\tilde{O}(n)$ round algorithm is almost optimal for any class of graphs: directed or undirected, weighted or unweighted. 

An $\tilde{\Omega}(n)$ CONGEST lower bound for computing directed weighted 2-SiSP for one pair is shown in~\cite{ManoharanR24rp}, and trivially applies to 2-APSiSP as well. However, this lower bound applies only to directed weighted graphs and requires the input pair of vertices to have a $\Theta(n)$-hop shortest path. The lower bound in this paper for 2-APSiSP is stronger, as it applies to undirected unweighted graphs, even if the graph has constant diameter. We prove this result in Section~\ref{sec:apsisp:lb}.

\begin{restatable}{theorem}{thmapsisplb}
Given an undirected unweighted graph $G=(V,E)$, computing 2-APSiSP requires $\tilde{\Omega}(n)$ rounds, even if APSP distances are known (i.e. vertex $x$ knows distances $d(x,y)\; \forall y \in V$). This lower bound applies even if $G$ has constant diameter, and against $(3-\epsilon)$-approximation algorithms (for any constant $\epsilon>0$). In directed graphs, it holds for any constant approximation ratio.
    \label{thm:apsisp:lb}
\end{restatable}

\subsection{Prior Work}\label{sec:prior}
\paragraph{Distance Sensitivity Oracles (DSO)}
The DSO problem has been studied extensively in the sequential setting~(e.g. \cite{DemetrescuTCR08dso, BernsteinK08dso, BernsteinK09dso, WeimannY13, Ren22, DeyG24}). No non-trivial algorithms for DSO were known in the distributed CONGEST setting, but some related problems have received attention.

Algorithms and lower bounds for computing replacement path distances for a single pair of vertices (RPaths) were presented in~\cite{ManoharanR24rp}, achieving almost optimal bounds in most cases: These include a $\tilde{\Theta}(n)$ round bound (tight up to polylog factors) for directed weighted graphs, and bounds almost matching $SSSP$ round complexity for undirected graphs; nontrivial upper and lower bounds were also presented for unweighted directed graph. Very recently, matching CONGEST upper and lower bounds for RPaths in directed unweighted graphs were given in~\cite{ChangCDMNS25}.
A distributed $O(D\log n)$ algorithm for single source replacement paths in undirected unweighted graphs was given in~\cite{GhaffariP16fault}. 

Distributed constructions of fault-tolerant preservers and spanners have been studied~\cite{BodwinP23,DinitzR20, Parter22vertex}; these methods construct a sparse subgraph that exactly or approximately preserves replacement path distances. While these constructions could potentially be used during preprocessing, they do not seem to yield an efficient query procedure to answer DSO queries.

Distance oracle algorithms, which preprocess the graph to answer approximate shortest path queries without any edge failures, have received some attention in the distributed setting: a CONGEST algorithm for $(3+\epsilon)$-approximate \textit{undirected} distance oracle was given in~\cite{CHLP21oracle}. There is no prior work on non-trivial distributed distance oracles for directed graphs.

\paragraph{All Pairs Second Simple Shortest Paths (2-APSiSP)} 
The 2-APSiSP problem and 2-SiSP problem are well understood in the sequential setting. A sequential $\tilde{O}(mn)$ time algorithm for 2-APSiSP was given in~\cite{AgarwalR16sisp}, which matches the running time of the best directed weighted 2-SiSP algorithm given in~\cite{Yen1971}. A fine-grained sequential lower bound of $\Omega(mn)$ for directed weighted 2-SiSP was shown in~\cite{AgarwalR18finegrained}, which trivially applies to 2-APSiSP as well. Thus, we have almost tight (conditional) bounds in the sequential setting.

There is no prior work on 2-APSiSP in the distributed setting. Its single-pair variant 2-SiSP has been studied in the \congest model, where tight bounds of $\tilde{\Theta}(n)$ for directed weighted graphs and $\Theta(SSSP)$ for undirected graphs are known~\cite{ManoharanR24rp}.

\subsection{Roadmap} The remainder of this paper is organized as follows. We begin in Section~\ref{sec:exclude} by developing our excluded shortest paths algorithm, a core tool used throughout the paper. In Section~\ref{sec:dso:algs}, we present our DSO algorithms. To complement these algorithms, we present lower bounds in Section~\ref{sec:lbs} for $k$-SSP and DSO. Finally, we present our near-optimal upper and lower bounds for 2-APSiSP in Section~\ref{sec:apsisp}, and conclude with some open problems in Section~\ref{sec:conclusion}.

\section{Distributed Excluded Shortest Paths}
\label{sec:exclude}

\definecolor{mygreen}{HTML}{0ADD08}
\definecolor{myblue}{HTML}{57B9FF}

\begin{figure}[t]
    \centering
    
    \begin{minipage}{0.45\textwidth}
        \scalebox{0.6}{
    \begin{tikzpicture}
        \tikzstyle{vertex}=[circle, draw=black,minimum size=20pt, thick]
    \tikzstyle{vertexsm}=[circle, draw=black,minimum size=5pt, thick]
    \tikzstyle{pedge}=[draw=blue, very thick]
    \tikzstyle{redge}=[draw=mygreen]
    \tikzstyle{gedge}=[draw=red, very thick]

        \node[vertex] (x) at (7,0) {$x$};
        \node[vertex] (r0) at (5,-1) {$a$};
        \node[vertex] (r1) at (7.6,-1) {};
        \node[vertex] (r2) at (9,-1) {};
        \node[vertex] (v1) at (7.4,-2.5) {};
        \node[vertex] (v) at (6.9,-4) {$v$};
        \node (r21) at (10,-3) {};

        \node[vertex] (p1) at (4,-2.5) {$b$};
        \node[vertex] (p2) at (5,-4) {$c$};
        \node at (5.8,-1.7) {\large \textcolor{blue}{${P = \langle a,b,c \rangle}$}};
        \node[vertex] (t1) at (3.2,-4) {};
        \node[vertex] (t11) at (2.2,-5.5) {};
        \node[vertex] (t12) at (3.7,-5.5) {};
        \node[vertex] (z) at (5.4,-5.5) {$z$};

        \node (ur) at (10,-3.2) {\dots};
        \draw[dashed] (1.8,-6) rectangle (5.9,-2);
        \node at (2.3,-1.6) {\large $T_x(P)$};

        \path[draw,thick,->] (x) edge (r0);
        \path[draw,thick,->] (x) edge (r1);
        \path[draw,thick,->] (x) edge (r2);

        \path[draw,thick,->] (r1) edge (v1);
        \path[draw,thick,->] (v1) edge (v);
        \path[draw,thick,->] (r2) edge (r21);

        \path[draw,thick,->,pedge] (r0) edge (p1);
        \path[draw,thick,->,pedge] (p1) edge (p2);

        \path[draw,thick,->] (p1) edge (t1);
        \path[draw,thick,->] (t1) edge (t11);
        \path[draw,thick,->] (t1) edge (t12);
        \path[draw,thick,->] (p2) edge (z);

        \path[draw,dashed,->,thick] (v) edge (z);
        
    \end{tikzpicture}
    }
    \caption{Example graph $G$ with out-shortest path tree $T_x$ for source vertex $x$. Blue path $P=\langle a,b,c \rangle$ denotes an example excluded path. $T_x(P)$ denotes the subtree rooted at key vertex $b$ of $P$ (second vertex on $P$). Solid edges denote tree edges in $T_x$, dashed $v$-$z$ edge denotes non-tree edge in $G$.}
    \label{fig:exclude}
    \end{minipage}%
    \hfill
    \begin{minipage}{0.45\textwidth}
        \scalebox{0.6}{
    \begin{tikzpicture}
        \tikzstyle{vertex}=[circle, draw=black,minimum size=20pt, thick]
    \tikzstyle{vertexsm}=[circle, draw=black,minimum size=5pt, thick]
    \tikzstyle{pedge}=[draw=myblue]
    \tikzstyle{redge}=[draw=mygreen, very thick]
    \tikzstyle{gedge}=[draw=red, very thick]

        \node[vertex] (x) at (7,0) {$x$};
        \node[vertex] (r0) at (5,-1) {$a$};
        \node[vertex] (r1) at (7.6,-1) {};
        \node[vertex] (r2) at (9,-1) {};
        \node[vertex] (v1) at (7.4,-2.5) {};
        \node[vertex] (v) at (6.9,-4) {$v$};
        \node (r21) at (10,-3) {};

        \node[vertex] (p1) at (4,-2.5) {$b$};
        \node[vertex] (p2) at (5,-4) {$c$};
        \node[vertex] (t1) at (3.2,-4) {};
        \node[vertex] (t11) at (2.2,-5.5) {};
        \node[vertex] (t12) at (3.7,-5.5) {};
        \node[vertex] (z) at (5.4,-5.5) {$z$};

        \node (ur) at (10,-3.2) {\dots};
        \draw[dashed] (1.8,-6) rectangle (5.9,-2);
        \node at (2.3,-1.6) {\large $T_x(P)$};

        \path[draw,thick,->] (x) edge (r0);
        \path[draw,thick,->,redge] (x) edge (r1);
        \path[draw,thick,->] (x) edge (r2);

        \path[draw,thick,->,redge] (r1) edge (v1);
        \path[draw,thick,->,redge] (v1) edge (v);
        \path[draw,thick,->] (r2) edge (r21);


        \path[draw,thick,->] (p1) edge (t1);
        \path[draw,thick,->] (t1) edge (t11);
        \path[draw,thick,->] (t1) edge (t12);
        \path[draw,thick,->] (p2) edge (z);

        \path[draw,dashed,->,redge,thick] (v) edge (z);
        \path[draw,dotted,->,gedge,thick] (x) edge[bend left=20] (z);
    \end{tikzpicture}
    }
    \caption{Example graph $G$ where path $P=\langle a,b,c \rangle$ is removed from $T_x$. Green edges denote the shortest $x$ to $z$ path through $v$ when $P$ is excluded. Dotted red edge denotes an added $x$-$z$ edge, assigned the same weight as the green path from $x$ to $z$ in $G$.}
    \label{fig:exclude:removed}
    \end{minipage}
 
\end{figure}

In this section, we present algorithms for the excluded shortest paths problem (see Definition~\ref{def:exclude}). We are given a graph $G=(V,E)$, a set of sources $X \subseteq V$ and an independent set of paths $\mathcal{R}_x$ for each source $x$, and we need to compute $d(x,y,P)$ for each $x \in X, y \in V$ and $P \in \mathcal{R}_x$. 
We first consider the single source case and then extend to multiple sources.

For a single source, we build on a sequential method given in~\cite{DemetrescuTCR08dso}, which introduced this problem. From the single source $x \in V$, consider a path $P \in \mathcal{R}_x$. Recall from Definition~\ref{def:indpaths} that $T_x(P)$ denotes the vertices under the subtree of $T_x$ rooted at the second vertex in $P$ from $x$, which we call the \textit{key} vertex of $P$. We consider edge failures, unlike~\cite{DemetrescuTCR08dso} that considers vertex failures. So paths in $G$ can use the first \textit{vertex} of $P$ when $P$ is excluded and thus the removal of $P$ only affects the distances of vertices in $T_x(P)$. See Figure~\ref{fig:exclude} for an example $T_x(P)$.

Consider a vertex $z$ in $T_x(P)$. We add an edge from source $x$ to vertex $z$ (we call this the $x$-$z$ edge) that captures the minimum distance of a path from $x$ to $z$ that does not use \textit{any} edge in $T_x(P)$. We assign this edge the weight $d^*(x,z) = min_{v \in N_{in}(z), v \not\in T_x(P)} d(x,v)+w(v,z)$~\cite{DemetrescuTCR08dso}, since such a path must reach $z$ through some in-neighbor not in $T_x(P)$. We can now compute the distance $d(x,z,P)$ by computing SSSP from source $x$ in the graph with these $x$-$z$ edges added, and edges in all $P \in \mathcal{R}_x$ removed. This allows us to capture paths that use some edges in $T_x(P)$ but not edges in $P$ itself. See the added $x$-$z$ edge (dotted red edge) in Figure~\ref{fig:exclude:removed} for an example. 

Algorithm~\ref{alg:exclude} implements this approach in the CONGEST setting. In line~\ref{alg:exclude:sssp}, we compute SSSP distances from source $x$. In line~\ref{alg:exclude:edgedown}, we send the identity of the key vertex of $P$ from $x$ to each vertex in subtree $T_x(P)$ using a downcast along $T_x$. This allows each vertex to know which subtree $T_x(P)$ it belongs to. Note that since $\mathcal{R}_x$ is independent, any vertex $y$ belongs to at most one $T_x(P)$. The downcast takes $O(n)$ rounds and $O(1)$ congestion for source $x$. 

Each vertex $y$ in $T_x(P)$ sends distance $d(x,y)$ and the key vertex of $P$ to all its neighbors in line~\ref{alg:exclude:neighb} in $O(1)$ rounds. Using this neighbor information, we compute the weight $d^*(x,z)$ of the added $x$-$z$ edge locally at $z$ in line~\ref{alg:exclude:edgewt}. Finally, the SSSP computation in line~\ref{alg:exclude:finalsssp} computes all excluded shortest paths distances, and the distance $d(x,z,P)$ is known at vertex $z$. Correctness of this procedure follows from~\cite{DemetrescuTCR08dso}.

Note that all steps other than the SSSP computations in 
lines~\ref{alg:exclude:sssp} and~\ref{alg:exclude:finalsssp} take $O(n)$ rounds and have $O(1)$ congestion. To maintain this low congestion, we use the directed weighted SSSP algorithm of~\cite{GhaffariT24sssp} which takes $\tilde{O}(n)$ rounds and has $\tilde{O}(1)$ congestion (instead of, say,~\cite{CaoF23parallel} with high congestion). Our total round complexity is $\tilde{O}(n)$ rounds and $\tilde{O}(1)$ congestion. 

Although we focus on algorithms for directed weighted graphs, we note that if the graph is directed and unweighted we can remove the polylog factors to obtain a bound of $O(n)$ rounds and $O(1)$ congestion. For this we use $O(n)$-round BFS to compute shortest path distances in line~\ref{alg:exclude:sssp}, and simulate the $x$-$z$ weighted edges (which have weight $d^*(x,z) < n$) added in line~\ref{alg:exclude:finalsssp} locally at $z$ by assuming a message from $x$ arrives at round $d^*(x,z)$ of the BFS.

\begin{algorithm}[t]
    \caption{Single Source Excluded Shortest Paths}
    \begin{algorithmic}[1]
        \Require Graph $G=(V,E)$, source $x \in V$, independent set of paths $\mathcal{R}_x$.
        \Ensure Compute distance $d(x,y,P)$ for each $y \in V, P \in \mathcal{R}_x$.
        \State Compute SSSP from $x$ to compute $d(x,y)$ at $y$ for each $y \in V$.\label{alg:exclude:sssp}
        \State Using a downcast from $x$ along $T_x$, communicate key vertex of $P \in \mathcal{R}_x$ to each $z \in T_x(P)$. Vertex $z \in V$ now knows which subtree $T_x(P)$ it is in, if any.\label{alg:exclude:edgedown}
        \State Each vertex $y$ communicates $d(x,y)$ to its neighbors $N_{out}(y)$. If $y$ is in a subtree $T_x(P)$ (as identified in line~\ref{alg:exclude:edgedown}), it communicates the key vertex of $P$ as well to its neighbors. \label{alg:exclude:neighb}
        \State Each vertex $z \in T_x(P)$ locally computes $d^*(x,z) = \min\limits_{v \in N_{in}(z), v \not\in T_x(P)} (d(x,v)+w(v,z))$. The check for $v \not\in T_x(P)$ is done locally using the information from line~\ref{alg:exclude:neighb}. \label{alg:exclude:edgewt}
        \State Perform SSSP from $x$ on $G$ with edges in $\mathcal{R}_x$ removed, and added $x$-$z$ edge with weight $d^*(x,z)$. These added edges are simulated locally at $z$ in the course of the SSSP algorithm. \label{alg:exclude:finalsssp}
    \end{algorithmic}
    \label{alg:exclude}
\end{algorithm}

\subsection{Multiple-Source Excluded Shortest Paths.}
Algorithm~\ref{alg:exclude} gives us an $\tilde{O}(1)$-congestion algorithm for computing excluded shortest paths from one source. To compute excluded shortest paths from multiple sources, we use random scheduling~(see Section~\ref{sec:primitives}). This proves Theorem~\ref{thm:dso:exclude} restated below. We will use this result in our first DSO algorithm and our 2-APSiSP algorithm. 

\thmdsoexclude*

\section{Algorithms for Distance Sensitivity Oracles}
\label{sec:dso:algs}

We now present our DSO algorithms for directed weighted graphs. They also trivially apply to undirected or unweighted graphs. We first consider answering DSO queries with no preprocessing in Section~\ref{sec:dso:nopre}.
We then present two algorithms that utilize preprocessing to improve query rounds in Sections~\ref{sec:dso:algquery} and~\ref{sec:dso:algpre}.

In our DSO algorithms we assume that edge $e$ is on an $x$-$y$ shortest path. If APSP distances have been computed, we can readily check if $e=(z,z')$ is on a $x$-$y$ shortest path by checking if $d(x,y) = d(x,z) + w(e)+d(z',y)$. If this condition does not hold, we can immediately answer $d(x,y,e)=d(x,y)$.

\subsection{DSO and seb-DSO without Preprocessing}
\label{sec:dso:nopre}
Without any preprocessing, a batch of $k$ arbitrary queries can be trivially answered in $k \cdot SSSP$ rounds (the current best SSSP algorithm takes $\tilde{O}(\sqrt{n}+n^{2/5+o(1)}D^{2/5}+D)$~\cite{CaoF23parallel} rounds). 
This works even for general batched DSO, where the failed edge may vary across queries. 

For seb-DSO, where all queries concern the same failed edge, we can improve on $k \cdot SSSP$ rounds without any preprocessing. We show below that we can achieve the same upper bounds as $k$-SSP, establishing Observation~\ref{thm:dso:seb} from Section~\ref{sec:dso:sebintro} for directed graphs. Similarly for undirected graphs, we can use the previously known $k$-SSP algorithms in~\cite{ElkinN19} ($(1+\epsilon)$-approximate undirected weighted $k$-SSP) and in~\cite{LenzenPP19,HoangPDGYPR19} (exact undirected unweighted $k$-SSP).

In Corollary~\ref{lem:seblb}~(Section~\ref{sec:dso:lb}), we prove a $\tilde{\Omega}(\sqrt{nk}+D)$ lower bound for seb-DSO without preprocessing in directed graphs and undirected weighted graphs. This is derived from the $k$-SSP lower bound in Theorem~\ref{thm:kssp:lb}~(Section~\ref{sec:results:lb}): For directed graphs, our bound is tight for $k \ge n^{1/3}$, and in undirected weighted graphs, our bound is almost optimal for all $k$ for approximate seb-DSO.

\begin{proof}[Proof of Observation~\ref{thm:dso:seb}]
    Let the set of $k$ queries be $d(x_i,y_i,e)$ for $1 \le i \le k$, with $e$ being the single failed edge. Among these sources $U=\{x_1,x_2,\dots x_k\}$, if there are $|U| = k' \le k$ unique vertices, we compute $k'$-SSP in the graph $G-\{e\}$ with source set $U$. This computes each distance $d(x_i,y_i,e)$ at $y_i$, using a $k$-SSP algorithm.
\end{proof}

If all queries of the seb-DSO have the same source or sink
we can do even better:
we only need one SSSP computation with the failed edge removed (in the reversed graph in the case of a single sink), and we can achieve $O(SSSP)$ query cost.

\subsection{DSO with Fast Query Responses}
\label{sec:dso:algquery}
In this section, we present a CONGEST method for directed weighted DSO that has fast query responses, and works for general batched DSO. This algorithm takes $\tilde{O}(n^{3/2})$ rounds to preprocess the graph, has oracle size $\tilde{O}(n)$ per node (for a total of $\tilde{O}(n^2)$ size), and answers $k$ arbitrary queries in $O(k+D)$ rounds.

Our method, described in Algorithm~\ref{alg:dsoquery}, builds on a sequential DSO algorithm in~\cite{BernsteinK08dso} that runs in time $\tilde{O}(mn^{1.5})$ and answers each query in $O(1)$ time. This sequential construction achieves an oracle size of $\tilde{O}(n^2)$, improving on a previous sequential algorithm of~\cite{DemetrescuTCR08dso} that has the same preprocessing time but an oracle size $\tilde{O}(n^{2.5})$. As mentioned earlier, implementing~\cite{DemetrescuTCR08dso} in \congest incurs an oracle size of $\tilde{O}(n^{3/2})$ per node, a factor of $\sqrt{n}$ larger than the $\tilde{O}(n)$ size per node we achieve.

The sequential algorithm of~\cite{BernsteinK08dso} achieves reduced oracle size by using randomized sampling to partition shortest paths into intervals of geometrically increasing lengths, instead of the $\sqrt{n}$-length intervals in~\cite{DemetrescuTCR08dso}. This allows for a better reuse of intervals in different queries, leading to reduced oracle size. However, 
a distributed implementation of this algorithm requires communicating information about the sampled vertices and intervals in the graph, which is typically expensive in the \congest model. We show how to implement this efficiently, while maintaining $\tilde{O}(n^{3/2})$ preprocessing rounds and achieving $\tilde{O}(n)$ oracle size per node.
 
We begin by reviewing elements of the sequential algorithm in~\cite{BernsteinK08dso} in Section~\ref{sec:dso:center}. Next, in Section~\ref{sec:dso:centerinterval} we show how to implement these ideas in \congest while avoiding high communication costs. In Section~\ref{sec:dso:excludeinterval}, we describe how we compute various types of excluded shortest path distances using different methods during the preprocessing phase. Finally, we combine these components in Algorithm~\ref{alg:dsoquery} and provide its analysis in Section~\ref{sec:dso:dsoqueryana}.

\subsubsection{Centers and Intervals} 
\label{sec:dso:center}
A key element of~\cite{BernsteinK08dso} is to randomly sample vertices to be \textit{centers} of varying \textit{priority}. For $1 \le j \le \log n$, the algorithm samples $\tilde{\Theta}(n/2^j)$ vertices to be centers of priority $j$. Every vertex is at least a priority 1 center
and each vertex is assigned the largest priority for which it is sampled. 

Given an $x$-$y$ shortest path, we identify the sequence of centers of strictly increasing priority starting from $x$. Once the maximum priority center is reached, we continue with centers of strictly decreasing priorities till we reach $y$. This creates a \textit{center sequence} of at most $2\log n$ vertices, that partitions the shortest path into \textit{center intervals}.

A center $v$ in this sequence is said to \textit{cover} the edges in the intervals with $v$ as an endpoint, and $d(x,v,e)$ or $d(v,y,e)$ is computed for each $e$ in the interval (depending on whether $e$ is on the $x$-$v$ or $v$-$y$ segment of the shortest path respectively). These distances are computed using excluded shortest paths computations, by excluding all edges at a fixed hop-length from $v$ at the same time. A center of priority $j$ only has to cover edges up to $\tilde{O}(2^j)$ depth \whp in $n$, hence we perform $\tilde{O}(2^j)$ exclude computations
for a priority $j$ center.

We denote an interval with endpoints $c_x, c_y$ as $(c_x,c_y)$. If we know the distance $d(x,y,(c_x,c_y))$, i.e., the distance when all edges in the $(c_x,c_y)$ interval are excluded, then, from~\cite{DemetrescuTCR08dso} we can compute the replacement path distance as:

\begin{equation*}
    d(x,y,e) = \min(d(x,c_x)+d(c_x,y,e), d(x,c_y,e)+d(c_y,y), d(x,y,(c_x,c_y)))
\end{equation*}

The first term in the above expression captures replacement paths that diverge from the $x$-$y$ shortest path after $c_x$, and the second captures paths that rejoin before $c_y$. The remaining type of path diverges before $c_x$ and rejoins after $c_y$, and thus avoids the whole $(c_x,c_y)$ interval. The third term, $d(x,y,(c_x,c_y))$, handles paths that diverge before $c_x$ and rejoin after $c_y$, avoiding the interval entirely. 

It may not be feasible to compute $d(x,y,(c_x,c_y))$ efficiently for all intervals $(c_x,c_y)$.
Instead, if we define $D(x,y,(c_x,c_y)) = \max_{e \in (c_x,c_y)} d(x,y,e)$, it is shown in (\cite{BernsteinK08dso}, Lemma 4.1) that $d(x,y,e)$ is correctly computed by Equation~\ref{eqn:dso} below, which uses $D(x,y,(c_x,c_y))$ in place of $d(x,y,(c_x,c_y))$.

\begin{align}\label{eqn:dso}
    d(x,y,e) & = \min(d(x,c_x)+d(c_x,y,e), d(x,c_y,e)+d(c_y,y), D(x,y,(c_x,c_y)))  \notag \\
    & \; \text{where } D(x,y,(c_x,c_y)) = \max_{e \in (c_x,c_y)} d(x,y,e)
\end{align}

Note that it may be the case that $d(x,y,(c_x,c_y))\ne D(x,y,(c_x,c_y))$.
For convenience, in Algorithm~\ref{alg:dsoquery}, we will use the notation $D(x,y,(c_x,c_y))$ throughout even in cases where we compute $d(x,y,(c_x,c_y))$ directly (e.g. line~\ref{alg:dsoquery:longcompute}).

\subsubsection{Computing Center Intervals} 
\label{sec:dso:centerinterval}
In the sequential setting, computing center information along each $x$-$y$ shortest path for $x,y \in V$ can be performed readily using APSP and BFS computations. However, in the \congest setting, care is required to make relevant center information available at each node while avoiding congestion and increased oracle size. For instance, at a single node $v$, we cannot store the identities of every interval it is on as this could incur $\Omega(n^2)$ space at a node that is on many shortest paths; this problem does not arise in the centralized setting where the total size would still be $\tilde{O}(n^2)$. Additionally, the intervals for an $x$-$y$ path and an $x$-$y'$ path may be different even if $y$ is on the path from $x$ to $y'$: it depends on the priorities of intermediate vertices. 

In order to resolve these issues in the \congest setting, we first augment a \congest APSP algorithm, such as the one from~\cite{BernsteinN19apsp} to also pass on parent vertex, hop length and center information. The center information for $v \in V$ is the center closest to $v$ on the $x$-$v$ path with priority $j$, denoted $\CC{x}{v}{j}$ for each $x \in V$ and for each $1 \le j \le \log n$. The APSP algorithm can be readily modified to do so: In the base case of a single edge, this information is trivially computed. Whenever the algorithm combines two distances $d(x,u)$ and $d(u,y)$ to form distance $d(x,y)$, we update the center information for the $x$-$y$ path from the information for $x$-$u$ and $u$-$y$ paths. Note that the APSP algorithm would compute this information at the target vertex $y$. To make it available to $x$ as well, we repeat this on the reversed graph $G^r$. See Algorithm~\ref{alg:dsoapsp} for details.
Algorithm~\ref{alg:dsoapsp} is called in line~\ref{alg:dsoquery:apsp} of Algorithm~\ref{alg:dsoquery}.

Using the computed center information $\CC{x}{v}{j}$, we show how to locally compute the endpoints of a center interval in Algorithm~\ref{alg:dsointerval}. Note that this is  readily done in the sequential setting, but requires care in CONGEST since relevant values are stored at different nodes. This is called as a subroutine in Algorithm~\ref{alg:dsoquery} in several places to compute intervals whenever needed. We store the center information $\CC{x}{v}{j}, \CC{v}{x}{j}$ at a vertex $v$, which is $O(\log n)$ size per vertex $x$, and don't explicitly store the actual intervals for each $x$-$y$ path that $v$ may be on. In addition to node $v$, center information for certain \textit{covered} edges is also stored at endpoint $y$ of relevant paths (see line~\ref{alg:dsoquery:mainexclude} of Algorithm~\ref{alg:dsoquery}). This enables local computation of intervals using Algorithm~\ref{alg:dsointerval} when needed. By bounding the number of such covered edge and vertex pairs, we ensure that it does not take up too much space. During query time as well, we recompute interval information from the stored center information (see line~\ref{alg:dsoquery:queryinterval} of Algorithm~\ref{alg:dsoquery}). Thus, we maintain $\tilde{O}(n)$ oracle size at each node.

\subsubsection{Computing Excluded Interval Distances} 
\label{sec:dso:excludeinterval}
As noted earlier in Section~\ref{sec:dso:center}, we need to compute $x$-$y$ shortest path distances when all edges in a center interval $(c_x,c_y)$ are removed. Equivalently, using (Lemma 4.1~\cite{BernsteinK08dso}), we can replace this with the maximum edge replacement distance, $D(x,y,(c_x,c_y)) = \max_{e \in (c_x,c_y)} d(x,y,e)$. For ease of notation, we use $D(x,y,(c_x,c_y))$ even when we compute the exact excluded interval distance.

We compute the excluded interval distance differently depending on whether or not $c_x$ or $c_y$ has priority $\ge \frac{1}{2}\log n$ (similar to~\cite{BernsteinK08dso}). If an endpoint of the interval has priority $\ge \frac{1}{2}\log n$, we call this a long interval. Otherwise, \whp in $n$, the interval is within $\tilde{O}(\sqrt{n})$ hops of $x$ or $y$ and we call it a short interval.

We collect all long intervals in the shortest path tree rooted at $x$, and compute excluded shortest paths from $x$ for each interval one at a time (see lines~\ref{alg:dsoquery:longbroad},\ref{alg:dsoquery:longcompute} of Algorithm~\ref{alg:dsoquery}). For a fixed $x$, there are $\tilde{O}(\sqrt{n})$ intervals with at least one endpoint of priority $\ge \frac{1}{2}\log n$. Adding up over all $x \in V$, we get a total of $\tilde{O}(n^{3/2})$ intervals. Thus, we broadcast $\tilde{O}(n^{3/2})$ messages of the form $L(x,y,(c_x,c_y))$ in line~\ref{alg:dsoquery:longbroad}, and line~\ref{alg:dsoquery:longcompute} makes $\tilde{O}(n^{3/2})$ exclude computations. Using Theorem~\ref{thm:dso:exclude}, this takes $\tilde{O}(n^{3/2})$ rounds.

In the other case when both endpoints of a $x$-$y$ path interval have priority $< \frac{1}{2}\log n$, this short interval is within $\Theta(\sqrt{n})$ hops of $x$ or $y$ \whp in $n$. For each vertex $x \in V$, we compute excluded shortest paths distances from $x$ when each edge within $\Theta(\sqrt{n})$ hops is excluded line~\ref{alg:dsoquery:mainexclude}.
We use these exclude computations to also forward center information, so that we can locally determine the entire short interval at $y$. Finally, to compute distances for excluded short intervals, we locally compute $D(x,y,(c_x,c_y)) = \max_{e \in (c_x,c_y)} d(x,y,e)$ at $y$ in line~\ref{alg:dsoquery:shortcompute}.

\textbf{Why not use bottleneck vertices as in~\cite{BernsteinK09dso}?} The sequential algorithm of \cite{BernsteinK09dso} achieves $\tilde{O}(mn)$ preprocessing time by changing how excluded interval distances are computed. A bottleneck vertex $BV(x,y,(c_x,c_y))$ is identified in each interval $(c_x,c_y)$ that maximizes the replacement distance. Then, $D(x,y,(c_x,c_y))$ is computed by excluding just this bottleneck vertex. This computation is done for all intervals at once using a single SSSP on a constructed $\tilde{O}(n^2)$ vertex graph.
The second step can be implemented in \congest in $\tilde{O}(n)$ rounds by breaking up the $\tilde{O}(n^2)$ vertex graph into smaller $\tilde{O}(n)$ vertex graphs. However, the first step of finding the bottleneck vertices utilizes a $O(\log n)$ binary search procedure that needs distances stored at different nodes. Implementing this in \congest seems to require $\tilde{O}(n^2)$ broadcasts, blowing up the preprocessing time.

\subsubsection{Algorithm~\ref{alg:dsoquery} and Analysis} 
\label{sec:dso:dsoqueryana}

Algorithm~\ref{alg:dsoquery} works as follows: In the preprocessing algorithm, we first sample centers in line~\ref{alg:dsoquery:sample}. We compute shortest paths along with parent vertex, hop length, center information in line~\ref{alg:dsoquery:apsp} by calling the \textsc{ModifiedAPSP} subroutine described in Algorithm~\ref{alg:dsoapsp}. Then, in line~\ref{alg:dsoquery:mainexclude}, we compute excluded shortest paths distances for single edge removal. These are of two types: From each center excluding each edge it covers, and from each vertex upto $\sqrt{n}$ hops in order to compute short interval excluded distances locally. 

Now, we compute excluded shortest paths distances for entire intervals. This is again of two types: In lines~\ref{alg:dsoquery:longtrst}-\ref{alg:dsoquery:longtren}, we identify all long intervals and broadcast them. In line~\ref{alg:dsoquery:longcompute}, we compute the actual excluded distances for long intervals by collecting the broadcast messages. In lines~\ref{alg:dsoquery:shorttrst}-\ref{alg:dsoquery:shorttren}, we locally compute excluded distances for short intervals, using the distances computed in line~\ref{alg:dsoquery:mainexclude}. 

The final step of preprocessing in lines~\ref{alg:dsoquery:cleanst}-\ref{alg:dsoquery:cleanen} is to remove unnecessary information stored at nodes to achieve $\tilde{O}(n)$ oracle size per node. 

\begin{algorithm}[htp]
    \caption{DSO with fast query: $\tilde{O}(n^{3/2})$ rounds preprocessing, $O(k+D)$ rounds for $k$ queries and $\tilde{O}(n)$ oracle size per node.}
    \begin{algorithmic}[1] 
        \Require Directed weighted graph $G=(V,E)$.
        \Ensure Compute answers to $k$ replacement path distance queries $d(x_i,y_i,e_i)$, for $1 \le i \le k$ and $x_i,y_i \in V, e_i \in E$, that are broadcast to all nodes in $G$. $d(x_i,y_i,e_i)$ must be known at $y_i$ upon termination.
        \lineComment{\underline{Preprocessing Algorithm}}
        \ForAll{$v \in V$}
            \State Vertex $v$ randomly chooses a \textit{priority} $p_v$ for itself from $\{1,2, \dots \lceil \log n \rceil\}$, with a probability distribution choosing priority $j$ with probability $c/2^j$, for appropriate constant $c$. \label{alg:dsoquery:sample}\Comment{Local Computation at $v$}
        \EndFor
        \State Run \textsc{ModifiedAPSP}($G$) (see Algorithm~\ref{alg:dsoapsp}). For each $x,y \in V$, shortest path distance $d(x,y)$, hop length $h(x,y)$, parent vertex $p(x,y)$ of $y$,  the closest priority $j$ vertex to $y$, denoted $\CC{x}{y}{j}$ for $1 \le j \le \log n$ are computed at $x$ and $y$. \label{alg:dsoquery:apsp}
        \ForAll{$v \in V$}
            \State If $v$ has priority $p_v$, perform $\max(\Theta(2^{p_v}\log n), \sqrt{n})$ excluded shortest paths computations from $v$. In the $i$'th exclude computation, the set of excluded edges is all edges at depth $i$ in $T_v$. For edge $e=(u,u')$, $u$ can determine its depth using $h(v,u)$ from line~\ref{alg:dsoquery:apsp}. Along with the exclude computation, $u$ also sends $P(v,u)$ and $\CC{v}{u}{j}$, $1 \le j \le \log n$ to each $y \in V$ in the subtree rooted at $u'$ in $T_x$. Repeat this in the reversed graph $G^r$.
            \label{alg:dsoquery:mainexclude}
            \If{priority of $v$ is $p_v \ge \frac{1}{2}\log n$}  \label{alg:dsoquery:longtrst}
                \lineComment{Lines~\ref{alg:dsoquery:longbroadst}-\ref{alg:dsoquery:longbroaden}: At $v$, locally identify long center intervals with endpoint $v$, and broadcast a message for each such interval.}
                \For{edge $e = (v,v')$ incident to $v$} \label{alg:dsoquery:longbroadst}
                    \lineComment{At $v$, locally find all pairs $x,y \in V$ for which $e$ is first edge of an increasing center interval of $x$-$y$ shortest path.}
                    \For{$x,y \in V$}
                        \State Let $(c_x,c_y) \gets \textsc{FindInterval}(x,y,e)$ \Comment{$c_x,c_y$ are endpoints of interval containing $e$. Local computation at $v$, see Algorithm~\ref{alg:dsointerval}.}
                        \If{$c_x = v$ and $p_{c_y} > p_{c_x}$} \Comment{Increasing long interval.}
                            \State Construct a message $L(c_x,c_y,x,y)$ indicating this is a long interval, and broadcast. \Comment{This is used in line~\ref{alg:dsoquery:longcompute}.} \label{alg:dsoquery:longbroad}
                        \EndIf
                    \EndFor 
                \EndFor
                \State Repeat lines~\ref{alg:dsoquery:longbroadst}-\ref{alg:dsoquery:longbroad} for edges $e=(v',v)$. \Comment{Locally finds and broadcasts decreasing center intervals where $e$ is the last edge. }  \label{alg:dsoquery:longbroaden}
            \EndIf \label{alg:dsoquery:longtren}
        \EndFor 
        \ForAll{$x \in V$} \Comment{Compute distances when long intervals are excluded.}
            \State Let $x$ receive $f$ messages of the form $L(c_x,c_y,x,y)$ from line~\ref{alg:dsoquery:longbroad}. Perform $f$ exclude computations from $x$, where in each computation all edges in a {center interval} $(c_x,c_y)$ are excluded. After the broadcast in line~\ref{alg:dsoquery:longbroad}, using \textsc{FindInterval} each edge can locally determine if it is excluded. Vertex $y$ now knows $D(x,y,(c_x,c_y))$ for each long interval $(c_x,c_y)$. \label{alg:dsoquery:longcompute}
        \EndFor
        \Statex Continued on the next page \dots
        \algstore{break:dsoqueryalg}
    \end{algorithmic}
    \label{alg:dsoquery}
\end{algorithm}

\begin{algorithm}[htp]
    \begin{algorithmic}[1]

        \algrestore{break:dsoqueryalg}
        
        \ForAll{$y \in V$} \Comment{Locally at $y$, compute excluded short interval distances.} \label{alg:dsoquery:shorttrst}
            \For{$x \in V$} \Comment{Exclude short intervals in $x$-$y$ shortest path.} \label{alg:dsoquery:shortst}
                \For{$e = (v,v') \in E$ for which $y$ received distance $d(x,y,e)$ in line~\ref{alg:dsoquery:mainexclude}}
                    \State $(c_x, c_y) \gets \textsc{FindInterval}(x,y,e)$. \Comment{Local computation at $y$. $y$~knows $\CC{v}{y}{j}$ and received $\CC{x}{v}{j}$ in line~\ref{alg:dsoquery:mainexclude}, see Algorithm~\ref{alg:dsointerval}.}
                    \If{$h(x,v) \le \sqrt{n}$ and $c_y = v'$} \Comment{Locally at $y$, compute short interval excluded distance if $e$ is the last edge on $(c_x, c_y)$}
                        \State Using parent vertex values $p(x,v)$ received by $y$ in line~\ref{alg:dsoquery:mainexclude}, locally find all edges $e'$ in the interval $c_x, c_y$ by tracing the shortest path from $x$ until $c_x$ is reached. Compute $D(x,y,(c_x,c_y)) = \max_{e' \in (c_x,c_y)} d(x,y,e')$ and store it at $y$. \label{alg:dsoquery:shortcompute}
                    \EndIf
                \EndFor
            \EndFor
            \State Repeat lines~\ref{alg:dsoquery:shortst}-\ref{alg:dsoquery:shortcompute} in the reversed graph.
        \EndFor \label{alg:dsoquery:shorttren}
        \ForAll{$y \in V$} \Comment{Locally cleanup storage at $y$ to achieve $\tilde{O}(n)$ oracle size.} \label{alg:dsoquery:cleanst}
        \State Remove stored values $p(x,v)$, $\CC{x}{v}{j}$ values stored at $y$ from line~\ref{alg:dsoquery:mainexclude}. Remove $d(v,y,e)$ received from line~\ref{alg:dsoquery:mainexclude} if $h(v,y) > \Theta(2^{p_v} \log n)$.
        \State In the final oracle, the only values stored at $y$ are: $d(x,y), p(x,y), \CC{x}{y}{j}, \CC{y}{x}{j}$ for all $x \in V, 1\le j \le \log n$ from line~\ref{alg:dsoquery:apsp}, remaining $d(v,y,e)$ values from line~\ref{alg:dsoquery:mainexclude}, $D(x,y,(c_x,c_y))$ values from lines~\ref{alg:dsoquery:longcompute},\ref{alg:dsoquery:shortcompute}. \label{alg:dsoquery:storage}
        \EndFor \label{alg:dsoquery:cleanen}
        \lineComment{\underline{Query Algorithm}: Queries $d(x_i,y_i,e_i)$ for $1 \le i \le k$ are broadcast to all nodes} 
        \lineComment{All broadcasts in the following lines can be pipelined together, for $1 \le i \le k$ to run in $O(k+D)$ rounds.}
        \For{$1 \le i \le k$}
        \State Let $e_i=(v_i, v_i')$. 
        \State Vertex $v_i$ locally computes $(c_{x_i}, c_{y_i}) \gets \textsc{FindInterval}(x_i,y_i,e_i)$, and broadcasts $c_{x_i}, c_{y_i}$. \label{alg:dsoquery:queryinterval} \Comment{Local computation at $v_i$, see Algorithm~\ref{alg:dsointerval}}
        \State Vertex $x_i$ broadcasts $d(x_i,c_{x_i}), d(x_i,c_{y_i},e_i)$. \label{alg:dsoquery:querybc}\Comment{Computed in line~\ref{alg:dsoquery:mainexclude}}
        \State Compute locally at $y_i$ (Equation~(\ref{eqn:dso})): 
        { \setlength{\abovedisplayskip}{0pt}
\setlength{\belowdisplayskip}{0pt}
\setlength{\abovedisplayshortskip}{0pt}
\setlength{\belowdisplayshortskip}{0pt}
        \begin{align*}
            d(x_i,y_i,e_i) = \min\big(& d(x_i,c_{x_i})+d(c_{x_i},y_i,e_i), \\  
            & \; d(x_i,c_{y_i},e_i)+d(c_{y_i},y_i), \\
            & \; D(x_i, y_i, (c_{x_i},c_{y_i}))\big)
        \end{align*}
        }
        \Comment{$d(c_{y_i},y_i)$ is computed at $y_i$ in line~\ref{alg:dsoquery:apsp}, $d(c_{x_i},y_i,e_i)$ is computed at $y_i$ in line~\ref{alg:dsoquery:mainexclude}, $D(x_i, y_i, (c_{x_i},c_{y_i}))$ is computed at $y_i$ in line~\ref{alg:dsoquery:longcompute} or~\ref{alg:dsoquery:shortcompute}.} \label{alg:dsoquery:queryfinal}
        \EndFor
    \end{algorithmic}
\end{algorithm}

\begin{algorithm}[htp]
    \caption{Modified APSP in line~\ref{alg:dsoquery:apsp} of Algorithm~\ref{alg:dsoquery}}
    \begin{algorithmic}[1]
        \Procedure{ModifiedAPSP}{$G$}
            \State Compute APSP on $G$, using the algorithm of~\cite{BernsteinN19apsp}, computing shortest path distances $d(x,y)$ for $x,y \in V$ along with the following:
            \begin{enumerate}[topsep=0cm,itemsep=0.05cm]
                \item hop length $h(x,y)$ ,i.e. number of edges on the $x$-$y$ path
                \item parent vertex $p(x,y)$ that is closest to $y$
                \item the closest priority $j$ vertex to $y$, $\CC{x}{y}{j}$ on the $x$-$y$ shortest path, for each $1 \le j \le \log n$.
            \end{enumerate} 
            \Comment{These values are computed at $y$.}
            \State To track this information, whenever the algorithm of~\cite{BernsteinN19apsp} combines distances $d(x,v)$ and $d(v,y)$ to form distance $d(x,y)$, we also perform the update:
            \begin{enumerate}[topsep=0cm,itemsep=0.05cm]
                \item $h(x,y) \gets h(x,v) + h(v,y)$ 
                \item $p(x,y) \gets p(v,y)$
                \item for each $1 \le j \le \log n$, if $\CC{v}{y}{j} \ne \nil, \CC{x}{y}{j} \gets \CC{v}{y}{j}$; else $\CC{x}{y}{j} \gets \CC{x}{v}{j}$
            \end{enumerate} 
            \State Repeat on the reversed graph $G^r$ to make $\CC{x}{y}{j}$, $h(x,y)$, $p(x,y)$ available at node $x$, 
        \EndProcedure
    \end{algorithmic}
    \label{alg:dsoapsp}
\end{algorithm}

\begin{algorithm}[htp]
    \caption{\textit{Local Computation}: Helper Function for Algorithm~\ref{alg:dsoquery}}
    \begin{algorithmic}[1]
        \Require $x,y \in V, e = (v,v') \in E$. Requires local knowledge of $\CC{x}{v}{j}, \CC{v}{y}{j}$ for $1 \le j \le \log n$, these values are computed in Algorithm~\ref{alg:dsoapsp}.
        \Ensure Locally compute endpoints of the center interval of $x$-$y$ shortest path containing $e$. This local computation is at $v$ or $y$, depending on context. In case it is at $y$, $\CC{x}{v}{j}$ values are sent to $y$ before this computation.
        \Procedure{FindInterval}{$x,y,e$}
            \State Let $j_x$ be the maximum $j$ for which $\CC{x}{v}{j} \ne \nil$. \label{alg:dsointerval:st} \Comment{Highest priority on $x$-$v$ path} 
            \If{$\CC{v}{y}{j_x+1} \ne \nil$}
                \State \Return (\CC{x}{v}{j_x}, \CC{v}{y}{j_x+1}) \Comment{$e$ is in an increasing interval}
            \Else 
                \lineComment{Lines \ref{alg:dsointerval:decst}-\ref{alg:dsointerval:decen}: $e$ is in a decreasing or highest priority interval}
                \State Let $j_y$ be the maximum $j$ for which $\CC{v}{y}{j} \ne \nil$. \Comment{Highest priority on $v$-$y$ path. Note that $j_y < j_x+1$} \label{alg:dsointerval:decst}
                \If{$j_y=j_x$} \Comment{Both have equal highest priority}
                    \State \Return (\CC{x}{v}{j_x}, \CC{v}{y}{j_y}) 
                \Else \Comment{Decreasing interval}
                    \State \Return (\CC{x}{v}{j_y+1}, \CC{v}{y}{j_y}) \label{alg:dsointerval:decen}
                \EndIf
            \EndIf
        \EndProcedure
    \end{algorithmic}
    \label{alg:dsointerval}
\end{algorithm}

Now that all excluded distances are precomputed, the query algorithm is straightforward. We identify the center interval using a broadcast in line~\ref{alg:dsoquery:queryinterval}. Based on this interval, we broadcast the relevant distance information stored at different nodes~\ref{alg:dsoquery:querybc}. Using this information, the final answer to query $d(x_i,y_i,e_i)$ is computed at $y_i$ in line~\ref{alg:dsoquery:queryfinal}.

We now prove that Algorithm~\ref{alg:dsoquery} constructs the DSO in Theorem~\ref{thm:dso:algquery}, restated below.

\thmdsoalgquery*
\begin{proof}
    \textbf{Correctness:} For ease of explanation, consider a single query $d(x,y,e)$.
    Let $P_{xy}$ be a replacement path from $x$ to $y$ when $e$ is removed. Let $c_x, c_y$ be the endpoints of the {center interval} of the $P_{xy}$ that contains $e$ (see Section~\ref{sec:dso:center}). Once the excluded shortest path distances are computed in lines~\ref{alg:dsoquery:longcompute} and \ref{alg:dsoquery:shortcompute}, the correctness query algorithm follows from Equation~\ref{eqn:dso} and (Lemma 4.1~\cite{BernsteinK08dso}): If the replacement passes through $c_x$ or $c_y$, the expression $d(x,c_x)+d(c_x,y,e)$ or $d(x,c_y,e)+d(c_y,y)$ respectively captures the correct distance. If it doesn't pass through $c_x$ or $c_y$, $D(x, y, (c_x,c_y))$ correctly captures the distance.

    \textbf{Round Complexity:} Computing modified APSP in line~\ref{alg:dsoquery:apsp} takes $\tilde{O}(n)$ rounds.
    
    In line~\ref{alg:dsoquery:mainexclude}, the algorithm computes distances when excluding edges one at a time.
    We analyze the computations differently depending on whether $p_v \ge \frac{1}{2}\log n$. If $p_v < \frac{1}{2}\log n$, we perform $\sqrt{n}$ exclude computations per vertex $v$, which amounts to at most $\tilde{O}(n^{3/2})$ exclude computations over all $v \in V$. If $p_v \ge \frac{1}{2}\log n$, we perform $\tilde{\Theta}(2^{p_v})$ excludes with source $v$. Due to our sampling probability, there are $\tilde{\Theta}(n/2^{p_v})$ vertices of priority $p_v$ \whp in $n$, leading to a total of $\tilde{O}(n)$ exclude computations for a given priority value $p_v$. Adding up the costs over all choices of $\frac{1}{2}\log n \le p_v \le \log n$, we get a total of $\tilde{O}(n)$ exclude computations. Thus, in line~\ref{alg:dsoquery:mainexclude}, we perform a total of $\tilde{O}(n^{3/2})$ exclude computations.

    Now, we analyze the distance computations where whole intervals are excluded.
    In line~\ref{alg:dsoquery:longbroad}, we compute all long intervals, where at least one endpoint has priority $p_v \ge \frac{1}{2}\log n$. As noted in (Lemma 8.1~\cite{BernsteinK08dso}), when considering increasing intervals in tree $T_x$, such an endpoint must be the first center of priority $\ge p_v$ on the $x$-$v$ shortest path. These increasing intervals end at a priority $(p_v+1)$ vertex, and are excluded one at a time in line~\ref{alg:dsoquery:longcompute}. There are $O(\sqrt{n})$ such priority $(p_v+1)$ vertices, and thus $\tilde{O}(\sqrt{n})$ long interval computations per vertex $x$. Over all vertices $x \in V$, we broadcast $\tilde{O}(n^{3/2})$ words and perform $\tilde{O}(n^{3/2})$ exclude computations. For short intervals, in lines~\ref{alg:dsoquery:shorttrst}-\ref{alg:dsoquery:shorttren}, we only perform local computations at the end vertex $y$ of each shortest path.

    Using the algorithm for excluded shortest paths, Algorithm~\ref{alg:exclude}, we perform all excludes in lines~\ref{alg:dsoquery:mainexclude},\ref{alg:dsoquery:longcompute} in $\tilde{O}(n^{3/2})$ rounds (see Theorem~\ref{thm:dso:exclude}). The rest of the preprocessing consists of local computations, so the total round complexity for preprocessing is $\tilde{O}(n^{3/2})$ rounds.

    The query algorithm broadcasts $O(1)$ values and then broadcasts the correct answer, so it takes $O(D)$ rounds with $O(1)$ congestion per edge. 
    Thus, we can answer a batch of $k$ queries by pipelining all broadcasts in $O(k+D)$ rounds.

    \textbf{Oracle Size:} As indicated in line~\ref{alg:dsoquery:storage}, at vertex $y$, we store information from the modified APSP including $d(x,y), P(x,y), \CC{x}{y}{j}, \CC{y}{x}{j}$, which is $\tilde{O}(n)$ size since we have $n$ vertices $x$ and $1 \le j \le \log n$. Apart from this, we store $d(v,y,e)$ for edges $e$ within $\tilde{\Theta}(2^{p_v})$. As mentioned in the round complexity discussion, there are $\tilde{\Theta}(n/2^{p_v})$ vertices of priority $p_v$ \whp in $n$, and storing $\tilde{\Theta}(2^{p_v})$ for each such vertex gives size $\tilde{O}(n)$. Finally, we store $D(x,y,(c_x,c_y))$ values. Along any $x$-$y$ path, there are at most $2\log n$ intervals, and thus $y$ stores $O(\log n)$ distances per vertex $x$, leading to size $\tilde{O}(n)$. Thus, the oracle size at a given vertex $y$ is $\tilde{O}(n)$, and the total size over all vertices is $\tilde{O}(n^2)$.
\end{proof}

\subsection{DSO with Fast Preprocessing}
\label{sec:dso:algpre}

In this section, we present Algorithm~\ref{alg:dso:preproc} to construct a DSO that uses $\tilde{O}(n)$ rounds for preprocessing, stores $\tilde{O}(n)$ space per node, and answers a batch of $k$ queries in $\tilde{O}(k \cdot \sqrt{n}+D)$ rounds. The DSO stores a total of $\tilde{O}(n^2)$ words across all nodes, which is competitive with the sequential DSO algorithm of~\cite{BernsteinK09dso} that uses $O(n^2 \log n)$ space.

Our algorithm takes an approach that is different from the interval-based  framework of sequential results~\cite{DemetrescuTCR08dso,BernsteinK08dso,BernsteinK09dso}, and the distributed Algorithm~\ref{alg:dsoquery} in the previous section. Our method entirely avoids the use of excluded shortest paths computations.
Instead, our algorithm builds on a graph sampling technique used in a sequential sub-$n^3$ DSO construction in~\cite{WeimannY13}, which we leverage to compute hop-limited replacement paths via SSSP computations in sampled graphs. 

We further augment the graph sampling technique with a multi-level strategy that samples vertices as well as graphs.
We vary the graph and vertex sampling parameters
across
$\frac{1}{2}\log n$ levels to capture replacement paths of different lengths, which we compute using SSSP computations from sampled vertices in the constructed sampled graphs. 

We first describe the graph sampling method and then describe our distributed algorithm.

\subsubsection{Hop-limited Replacement Paths} \label{sec:dso:hoplimrp}
Given a hop parameter $h$, and a graph $G=(V,E)$ along with a set of sources $S \subseteq V$, we wish to compute $h$-hop replacement path distances $d_h(s,x,e)$ for $s \in S, x \in V, e \in E$. Here, $d_h(s,x,e)$ denotes the minimum weight of an $s$-$x$ path containing up to $h$ edges that does not contain $e$. 

\begin{lemma}
    Given a graph $G=(V,E)$ and $s,x \in V, e \in E$, let $2 \le h \le n$ be a hop-length parameter.
    Sample $\tilde{h}=20 h \log n$ graphs $G_1,\cdots, G_{\tilde{h}}$ as subgraphs of $G$ by including each edge of $G$ with probability $1-\frac{1}{h}$. Then,
    \begin{enumerate}[label=\Alph*.]
        \item There are $\Theta(\log n)$ graphs $G_i$ such that $e \not\in G_i$, \whp in $n$.
        \item Let $P_{sx}^e$ be a $h$-hop replacement path from $s$ to $x$ avoiding $e$. There is at least one graph $G_i$ such that $G_i$ contains all edges in $P_{sx}^e$ but does not contain $e$, \whp in $n$.
    \end{enumerate}
    \label{lem:dso:sampgraph}
\end{lemma}
\begin{proof}
The proof adapts arguments from~\cite{WeimannY13}.

    A. For a fixed edge $e$ and sampled graph $G_i$, $e \not\in G_i$ with probability $\frac{1}{h}$. Let $f_e$ be the number of graphs $G_i$ not containing $e$, then $E[f_e] = \frac{h'}{h} = 20\log n$. Using a Chernoff bound, $Pr[f_e \ge 2E[f_e] ] \le e^{-E[f_e]/3} \le n^{-6}$. Using a union bound over the $O(n^2)$ edges, $f_e \le 40\log n$ for all edges $e$ \whp in $n$.

    B. For a fixed edge $e$, and a path $P_{sx}^e$ of $h$ hops, the probability that a sampled graph $G_i$ does not contain $e$ but does contain all $h$ edges of $P_{sx}^e$ is $\frac{1}{h} \left( 1-\frac{1}{h}\right)^h \ge \frac{1}{4h}$ for $h \ge 2$. The probability that at least one of the $\tilde{h}=20 h \log n$ graphs $G_i$ satisfies this property is 
        $1-\left(1-\frac{1}{4h}\right)^{\tilde{h}} \ge 1-e^{-\frac{\tilde{h}}{4h}} \ge 1-n^{-5}$.
        Using a union bound over all $O(n^4)$ tuples $(e,s,x)$, we obtain our result \whp in $n$.
\end{proof}

\subsubsection{CONGEST DSO Algorithm with Fast Preprocessing}

We now describe Algorithm~\ref{alg:dso:preproc}.
To obtain fast preprocessing, we only deal with replacement paths of hop-length $\ge \sqrt{n}$ during preprocessing. Shorter replacement paths are computed at query time using $\sqrt{n}$ steps of Bellman-Ford with the appropriate edge deleted (line~\ref{alg:dso:preproc:bf}). 

To compute replacement paths of hop-length $[2^j,2^{j+1}]$, for $j\ge\frac{1}{2}\log n$, we use the result in Lemma~\ref{lem:dso:sampgraph} with $h=2^{j+1}$. After sampling vertices (line~\ref{alg:dso:preproc:samp}) and graphs (line~\ref{alg:dso:preproc:sampgraph}), we compute shortest paths to and from sampled vertices in line~\ref{alg:dso:preproc:sssp}. With this information, we can compute replacement path distances at query time (lines~\ref{alg:dso:preproc:edgesetbroad}-\ref{alg:dso:preproc:querysamp}) without needing to compute APSP during preprocessing. Although we could use the graph sampling method for $\le \sqrt{n}$ hop-length replacement paths and stay within the $\tilde{O}(n)$ preprocessing bound, the number of sampled vertices would exceed $\sqrt{n}$ and answering a query by broadcast (line~\ref{alg:dso:preproc:querybroadcast}) would take $\omega(\sqrt{n})$ rounds. We now prove that Algorithm~\ref{alg:dso:preproc} constructs the DSO in Theorem~\ref{thm:dso:algpre}, restated below.

\begin{algorithm}[!t]
    \caption{DSO with fast preprocessing: $\tilde{O}(n)$ preprocessing and $\tilde{O}(\sqrt{n}+D)$ query}
    \begin{algorithmic}[1]
        \Require Graph $G=(V,E)$.
        \Ensure Answer query for replacement path distance $d(x,y,e)$ for any $x,y \in V, e \in E$.
        \lineComment{\underline{Preprocessing Algorithm}}
        \For{$j=\log (\sqrt{n}), \cdots, \log n$} \label{alg:dso:preproc:prepst}
            \lineComment{Preprocessing for replacement paths of hop-length $[2^j,2^{j+1}]$, where $2^j \ge \sqrt{n}$}
            \State Each vertex $s \in V$ is sampled and added to $S_j$ with probability $\frac{c \log n}{2^j}$. \label{alg:dso:preproc:samp}
            \State Let $h=2^{j+1}$. Sample $\tilde{O}(h)$ graphs $G^j_i$ as in Lemma~\ref{lem:dso:sampgraph} by randomly sampling each edge with probability $1-(1/h)$. \label{alg:dso:preproc:sampgraph} \rightComment{Each edge locally samples itself}
            \State For each $e \in G$, $I_e^j \gets \{(j,i) \mid e \not\in G^j_i\}$  \label{alg:dso:preproc:edgesets} \rightComment{Computed locally at endpoints of $e$}
            \State Compute SSSP from each vertex in $S_j$, on each $G^j_i$ (and reversed graph). The distance in graph $G^j_i$ is denoted $d^j_i$. Distances $d^j_i(s,x), d^j_i(x,s)$ are computed at $x$ for each $s \in S_j,x \in V$. \label{alg:dso:preproc:sssp}
        \EndFor
        \lineComment{\underline{Query Algorithm}: Given query $d(x,y,e)$}
        \lineComment{Find best replacement path of hop-length $\le \sqrt{n}$}
        \State Perform a distributed Bellman-Ford computation in $G-\{e\}$ starting from $x$, up to $\sqrt{n}$ hops. This computes $d_{\sqrt{n}}(x,y,e)$ at $y$. \label{alg:dso:preproc:bf}
        \lineComment{Find best replacement path of hop-length $> \sqrt{n}$}
        \For{$j=\log (\sqrt{n}), \cdots, \log n$}
            \lineComment{Handle paths of hop-length $[2^j,2^{j+1}]$}
            \State Endpoint of $e$ broadcasts $I^j_e$. \label{alg:dso:preproc:edgesetbroad}
            \State $x$ broadcasts $\{d^j_i(x,s) \mid (j,i) \in I^j_e, s \in S_j\}$, i.e. its distance to sampled vertices in $S_j$ in graphs where $e$ is not present. \label{alg:dso:preproc:querybroadcast}
            \For{$s \in S_j$}   \rightComment{Local computation at $y$}
            \lineComment{$d_s$ is the best $x$-$y$ replacement path distance through $s$}
            \State Compute $d_s = \min\limits_{(j,i) \in I_e^j} d^j_i(x,s) + d^j_i(s,y)$  \label{alg:dso:preproc:querysamp}
            \EndFor 
        \EndFor
        \State Compute $d(x,y,e) = \min(d_{\sqrt{n}}(x,y,e), \min\limits_{s \in V} d_s$) \rightComment{Local computation at $y$}  \label{alg:dso:preproc:querymin}
    \end{algorithmic}
    \label{alg:dso:preproc}
\end{algorithm}

\thmdsoalgpre*
\begin{proof}
    \textbf{Correctness:} For query $d(x,y,e)$, if the $x$-$y$ replacement path has hop-length $\le \sqrt{n}$, the distance is correctly computed in line~\ref{alg:dso:preproc:bf} using the Bellman-Ford computation from $x$ up to $\sqrt{n}$ hops. We now address replacement paths of hop-length $[2^j,2^{j+1}]$, for $2^j \ge \sqrt{n}$.

    Fix one such replacement path $P$ from $x$ to $y$ avoiding edge $e$, with hop length $2^j \le h(P) < 2^{j+1}$ and weight $w(P)$. In the preprocessing algorithm, consider the $j$'th iteration of the loop (lines \ref{alg:dso:preproc:prepst}-\ref{alg:dso:preproc:sssp}). Since we sample vertices into $S_j$ with probability $\tilde{O}(1/2^j)$, \whp in $n$ there is at least one vertex $s$ from $S_j$ that is on ${P}$. By Lemma~\ref{lem:dso:sampgraph}.B at least one of the sampled subgraphs $G^j_i$ contains all edges of ${P}$ and does not contain $e$ \whp in $n$. 
    In this particular $G^j_i$, the distances $d^j_i(s,y)$ and $d^j_i(x,s)$ are computed in line~\ref{alg:dso:preproc:sssp} and we have $w({P}) = d^j_i(x,s)+d^j_i(s,y)$.
    
    When answering query $d(x,y,e)$, we use the set $I^j_e$ computed in line~\ref{alg:dso:preproc:edgesets} and broadcast in line~\ref{alg:dso:preproc:edgesetbroad}. In line~\ref{alg:dso:preproc:querysamp}, we only consider shortest path distances in graphs not containing $e$, so all distances $d_s$ correspond to valid $x$-$y$ paths not containing $e$. For the particular $G^j_i$ that contains ${P}$ and not $e$, we correctly compute $d(x,y,e) = d^j_i(x,s)+d^j_i(s,y)$ in line~\ref{alg:dso:preproc:querysamp}. 

    \textbf{Round Complexity:} We first analyze the preprocessing algorithm. Consider an iteration $j$: lines~\ref{alg:dso:preproc:samp}-\ref{alg:dso:preproc:edgesets} only involve local computation, and line~\ref{alg:dso:preproc:sssp} performs SSSP from $|S_j| = \tilde{O}(n/2^j)$ vertices on $\tilde{O}(2^j)$ graphs. Using the $\tilde{O}(n)$-round low congestion SSSP algorithm from~\cite{GhaffariT24sssp}, the total congestion over all SSSP computations is at most $\tilde{O}(|S_j| \cdot 2^j) = \tilde{O}(n)$. Thus, using random scheduling, line~\ref{alg:dso:preproc:sssp} takes $\tilde{O}(n)$ rounds.

    Now, to answer a query $d(x,y,e)$, computing Bellman-Ford up to $\sqrt{n}$ hops in line~\ref{alg:dso:preproc:bf} takes $O(\sqrt{n})$ rounds. The set $I^j_e$ with size $O(\log n)$ is broadcast in $\tilde{O}(D)$ rounds in line~\ref{alg:dso:preproc:edgesetbroad}. Vertex $x$ then broadcasts a set of size $|I^j_e|\cdot |S_j| = \tilde{O}(\frac{n}{2^j}) = \tilde{O}(\sqrt{n})$ (since $2^j \ge \sqrt{n})$ in line~\ref{alg:dso:preproc:querybroadcast}, which takes $\tilde{O}(\sqrt{n}+D)$ rounds. Lines~\ref{alg:dso:preproc:querysamp},\ref{alg:dso:preproc:querymin} only involve local computation at $y$, so the total rounds to answer the query is $\tilde{O}(\sqrt{n}+D)$. To answer a batch of $k$ queries, we can broadcast the $\tilde{O}(k\sqrt{n})$ required values in lines~\ref{alg:dso:preproc:edgesetbroad},\ref{alg:dso:preproc:querybroadcast} in $\tilde{O}(k\sqrt{n}+D)$ rounds, and perform $k$ Bellman-Ford computations sequentially in $k\sqrt{n}$ rounds -- totaling $\tilde{O}(k\sqrt{n}+D)$ rounds.

    \textbf{Space Usage:} In line~\ref{alg:dso:preproc:sssp}, each vertex $v$ stores distance $d_i(s,v)$ for each sampled vertex $s \in S_h$ and each sampled graph $G_i$. In a single iteration $j$, this is $\tilde{O}(h \cdot \frac{n}{h}) = \tilde{O}(n)$ distances. Over all the $O(\log n)$ iterations, each vertex stores a total of $\tilde{O}(n)$ distances.

\end{proof}

\section{Lower bounds}
\label{sec:lbs}
We now present CONGEST lower bounds for $k$-SSP and seb-DSO in directed unweighted graphs and undirected weighted graphs.

\subsection{\texorpdfstring{Lower Bound for $k$-SSP}{Lower Bound for k-SSP}}
\label{sec:kssp:lb}

We establish the CONGEST lower bound of $\tilde{\Omega}(\sqrt{nk}+D)$ for computing $k$-SSP in directed unweighted graphs or undirected weighted graphs stated in Theorem~\ref{thm:kssp:lb}. The directed unweighted result trivially applies to directed weighted graphs. Note that the undirected weighted bound is nearly tight for $(1+\epsilon)$-approximation, given the algorithm of~\cite{ElkinN19}. For undirected unweighted $k$-SSP, there are optimal $O(k+D)$ round algorithms~\cite{LenzenPP19,HoangPDGYPR19}. 

Our lower bound holds against any $\alpha$-approximation algorithm for directed $k$-SSP (for constant $\alpha>1$), and against $(2-\epsilon)$-approximation for undirected weighted $k$-SSP. In Definition~\ref{def:kssp} of the $k$-SSP problem, we mentioned that $d(x,y)$ must be computed at node $y$. Our lower bound applies to a more general model, where $d(x,y)$ may be computed at any node in $G$. 

We first state our lower bound for the directed case in Section~\ref{sec:kssp:lbdir}. Then, using a modified construction that uses weighted edges instead of directed edges, we obtain our lower bound for undirected weighted graphs in Section~\ref{sec:kssp:lbunw}. Our lower bound is for the more general problem of computing distances between $k$ given vertex pairs, which we first state for directed graphs in Theorem~\ref{thm:kssp:lb2}. The lower bound for $k$-SSP readily follows from this result. Our proof uses a reduction from the communication complexity problem of the set disjointness problem to computing distances between $k$ vertex pairs. 

We build on a construction of~\cite{SarmaHKKNPPW12}, which gives a $\tilde{\Omega}(\sqrt{n}+D)$ CONGEST lower bound for undirected weighted SSSP. While this result has been used to prove other $\sqrt{n}$-lower bounds, our method appears to give the first lower bound between $n$ and $\sqrt{n}$ for a shortest path problem in the CONGEST model.

\subsubsection{\texorpdfstring{Directed Unweighted $k$-SSP Lower Bound}{Directed Unweighted k-SSP Lower Bound}}
\label{sec:kssp:lbdir}

\begin{figure}[t]
    \centering
    \tikzstyle{vertex}=[circle, draw=black,minimum size=20pt]
    \tikzstyle{vertexsm}=[circle, draw=black,minimum size=5pt]
    \scalebox{0.6}{
    \begin{tikzpicture}
    
        \node[vertexsm] (u00) at (7,0) {};
        \node[vertexsm] (u0p2) at (3,-2) {};
        \node[vertexsm] (u0p1) at (1,-3.5) {};
        \node[vertexsm] (u1p1) at (5,-3.5) {};
        \node[vertexsm] (u1d) at (11,-3.5) {};

        \node (ul) at (4,-1.5) {\dots};
        \node (ur) at (9,-1.5) {\dots};
        \node (V) at (8,-8) {\dots};
        \node (V) at (8,-7) {\dots};
        \node (V) at (8,-9) {\dots};
        \node (V) at (8,-10) {\dots};

        \node[vertex] (up0) at (0,-5) {$u_0$};
        \node[vertex] (up1) at (2,-5) {$u_1$};
        \node[vertex] (up2) at (4,-5) {$u_2$};

        \node[vertex] (upd) at (12,-5) {$u_\ell$};

        \node[vertex] (v10) at (0,-7) {$v_{0}^1$};
        \node[vertex] (v11) at (2,-7) {$v_{1}^1$};
        \node[vertex] (v12) at (4,-7) {$v_{2}^1$};
        \node[vertex] (v1d1) at (10,-7) {};
        \node[vertex] (v1d) at (12,-7) {$v_{\ell}^1$};

        \node[vertex] (v20) at (0,-8) {$v_{0}^2$};
        \node[vertex] (v21) at (2,-8) {$v_{1}^2$};
        \node[vertex] (v22) at (4,-8) {$v_{2}^2$};
        \node[vertex] (v2d1) at (10,-8) {};
        \node[vertex] (v2d) at (12,-8) {$v_{\ell}^2$};

        \node[vertex] (vg0) at (0,-10) {$v_{0}^q$};
        \node[vertex] (vg1) at (2,-10) {$v_{1}^q$};
        \node[vertex] (vg2) at (4,-10) {$v_{2}^q$};
        \node[vertex] (vgd1) at (10,-10) {};
        \node[vertex] (vgd) at (12,-10) {$v_{\ell}^q$};


        \node[vertex] (s1) at (-2.5,-6.5) {$a_1$};
        \node[vertex] (s2) at (-2.5,-7.5) {$a_2$};
        \node[vertex] (s3) at (-2.5,-8.5) {$a_3$};
        \node[vertex] (s4) at (-2.5,-10.5) {$a_k$};

        \node[vertex] (b1) at (14.5,-6.5) {$b_1$};
        \node[vertex] (b2) at (14.5,-7.5) {$b_2$};
        \node[vertex] (b3) at (14.5,-8.5) {$b_3$};
        \node[vertex] (b4) at (14.5,-10.5) {$b_k$};

        \path[draw,thick,->] (u0p2) edge (u0p1);
        \path[draw,thick,->] (u0p2) edge (u1p1);
        \path[draw,thick,->] (u0p1) edge (up0);
        \path[draw,thick,->] (u0p1) edge (up1);
        \path[draw,thick,->] (u1p1) edge (up2);

        \path[draw,thick,<-] (up0) edge [bend left] (v20);
        \path[draw,thick,<-] (up0) edge [bend left] (vg0);

        \path[draw,thick,<-] (up1) edge [bend left] (v11);
        \path[draw,thick,<-] (up1) edge [bend left] (v21);
        \path[draw,thick,<-] (up1) edge [bend left] (vg1);
        \path[draw,thick,<-] (up2) edge [bend left] (v12);
        \path[draw,thick,<-] (up2) edge [bend left] (v22);
        \path[draw,thick,<-] (up2) edge [bend left] (vg2);

        \path[draw,thick,<-] (upd) edge [bend right] (v1d);
        \path[draw,thick,<-] (upd) edge [bend right] (v2d);

        \path[draw,thick,->] (v10) edge (v11);
        \path[draw,thick,->] (v11) edge (v12);
        \path[draw,thick,->] (v12) edge (6,-7);
        \path[draw,thick,->] (v1d1) edge (v1d);

        \path[draw,thick,->] (v20) edge (v21);
        \path[draw,thick,->] (v21) edge (v22);
        \path[draw,thick,->] (v22) edge (6,-8);
        \path[draw,thick,->] (v2d1) edge (v2d);

        \path[draw,thick,->] (vg0) edge (vg1);
        \path[draw,thick,->] (vg1) edge (vg2);
        \path[draw,thick,->] (vg2) edge (6,-10);
        \path[draw,thick,->] (vgd1) edge (vgd);

        \path[draw,thick,->] (u00) edge  (5.5,-1);
        \path[draw,thick,->] (u00) edge  (8.5,-1);
        \path[draw,thick,<-] (upd) edge  (u1d);
        \path[draw,thick,->] (u1d) edge  (10,-5);
        \path[draw,thick,<-] (u1d) edge  (10,-2);

        \path[draw,thick,->, dashed] (s1) edge (v10);
        \path[draw,thick,->, dashed] (s1) edge (v20);
        \path[draw,thick,->, dashed] (s1) edge (vg0);

        \path[draw,thick,->, dashed] (s2) edge (v10);
        \path[draw,thick,->, dashed] (s2) edge (v20);
        \path[draw,thick,->, dashed] (s2) edge (vg0);

        \path[draw,thick,->, dashed] (s4) edge (v10);
        \path[draw,thick,->, dashed] (s4) edge (v20);
        \path[draw,thick,->, dashed] (s4) edge (vg0);

        \path[draw,thick,->, dashed] (v1d) edge (b1);
        \path[draw,thick,->, dashed] (v2d) edge (b1);
        \path[draw,thick,->, dashed] (vgd) edge (b1);
        \path[draw,thick,->, dashed] (v1d) edge (b2);
        \path[draw,thick,->, dashed] (v2d) edge (b2);
        \path[draw,thick,->, dashed] (vgd) edge (b2);
        \path[draw,thick,->, dashed] (v1d) edge (b4);
        \path[draw,thick,->, dashed] (v2d) edge (b4);
        \path[draw,thick,->, dashed] (vgd) edge (b4);
    \end{tikzpicture}
    }
    \caption{Lower bound construction for directed unweighted $k$-SSP}
    \label{fig:kssp:lb}
\end{figure}

Given $n$ and $k$, $1 \le k \le n$, our goal is to obtain a lower bound for directed unweighted $k$-SSP on an $n$-node graph. Our reduction uses the following directed lower bound graph construction, based on inputs $S_a, S_b$ for set disjointness, with $\tilde{\Theta}(\sqrt{nk})$-bits each.

\textbf{Graph Construction:} We construct directed unweighted $n$-node graph $G=(V,E)$, pictured in Figure~\ref{fig:kssp:lb}, as a balanced binary tree with $\ell = \tilde{\Theta}(\sqrt{nk})$ leaves, numbered $u_0,u_1,\dots u_{\ell}$, along with a set of $q=\tilde{\Theta}(\sqrt{n/k})$ paths $v^i_0$-$v^i_{\ell}$ of length $\ell$. The paths are connected to the leaves of the tree as in the figure, with all edges oriented from path to leaf. 
Additionally, we have $2k$ vertices $a_i,b_i$ for $1 \le i \le k$. We perform a reduction from $(kq)$-bit set disjointness, and the dashed edges in our construction depend on the inputs $S_a,S_b$ of the set disjointness instance. We index the input such that $S_a[i,j]$ represents the $((i-1)\cdot q+j)$'th bit for $1\le i\le k, 1\le j \le q$ (also for $S_b$), similar to a directed RPaths lower bound construction in~\cite{ManoharanR24rp}. The edge $(a_i,v^j_0)$ is present only if $S_a[i,j]=1$; similarly  edge $(v^j_{\ell},b_i)$ is present only if $S_b[i,j]=1$. All other solid edges in Figure~\ref{fig:kssp:lb} are always present. The constant terms in $\ell, q$ are adjusted so that $G$ has $n$ vertices.

First, we prove the following connection between the set disjointness input and distances in $G$: 

\begin{lemma}
    In $G$, $d(a_i,b_i) =\ell+2$ if there exists a $j$, $1 \le j \le q$, such that $S_a[i][j]=S_b[i][j]=1$ and $d(a_i,b_i) = \infty$ otherwise. \label{lem:kssp:lbdist}
\end{lemma}
\begin{proof}
    Consider the shortest path from vertex $a_i$ to vertex $b_i$ for each $1 \le i \le k$. If both edges $(a_i,v^j_0)$ and $(v^j_{\ell},b_i)$ are present in $G$ for some $1 \le j \le q$, then we can use the $j$'th path to get $d(a_i,b_i) =\ell+2$. Any path from $a_i$ to $b_i$ must use the entirety of one of the $q$ paths of length $\ell$, since there is no way to move between paths using a directed edge. The edges to trees nodes are also directed away from reaching $b_i$, so no other paths are possible, proving the result.
\end{proof}

We now show how we can solve set disjointness using an algorithm for $k$ vertex pair distances: 

\begin{lemma}
    Given an algorithm $\mathcal{A}$ that computes $k$ distances $d(a_i,b_i)$ in $R \le (\ell/2)$ rounds, 
    we can solve $(kq)$-bit set disjointness using $Rp \cdot \Theta(\log n)$ bits, where $p$ is the height of the tree. \label{lem:kssp:lbsd}
\end{lemma}
\begin{proof}
    Our proof is similar to~(\cite{SarmaHKKNPPW12}, Theorem~3.1). Assume we are given an algorithm $\mathcal{A}$ that computes the $k$ distances $d(a_i,b_i)$ in $R \le (\ell/2)$ rounds. We show how to solve set disjointness for the inputs $S_a, S_b$. We omit some details here, while highlighting our modifications to the argument in~\cite{SarmaHKKNPPW12}.
    
    Consider an $i$-left set $L_i$ which consists of the first $(\ell-i)$ leaf vertices, path vertices $v_{j}^{r}$ for $1\le r\le q$ and $j \le \ell-i$, and all tree vertices that are the ancestor of at least one included leaf vertex. Similarly, an $i$-right set consists of the last $(\ell-i)$ leaf vertices, their ancestors in the tree, and path vertices $v_{j}^{r}$ for $1\le r\le q, j \ge i$. The idea is that for $i < \frac{\ell}{2}$, at round $i$, only a limited amount of information from $a_1,a_2,\dots a_k$ vertices reaches the vertices in $R_i$, and similarly only a limited information from $b_1,\dots, b_k$ vertices reaches vertices in $L_i$. More specifically, given the states (i.e. received messages and local computation) of vertices in $R_{i-1}$ at round $(i-1)$, we can compute the states of vertices in $R_i$ at round $i$ with an additional $\le p$ messages sent from vertices outside $R_{i-1}$ to vertices in $R_i$. A similar claim holds for $L_i$, but we will prove this only for $R_i$.

    Consider a neighbor $v$ of vertex $u$ in $R_i$. If $v$ is in $R_{i-1}$, then we know the state of $v$ at round $(i-1)$, and hence the message that it would have sent to $u$. Thus, we are only concerned with messages from neighbors $v$ that are not in $R_{i-1}$. As shown in~\cite{SarmaHKKNPPW12}, this only happens with tree ancestor vertices where a vertex is in $R_i$ but one of its children is not in $R_{i-1}$. There can be at most one such neighbor at each level of the tree (at the leftmost boundary of $R_{i}$), so there are at most $p$ of them. So, given the $\le p$ messages sent along these edges, the state of all vertices in $R_{i}$ at round $i$ can be determined.

    Now, we consider a simulation of the CONGEST algorithm by Alice and Bob. Initially, Alice has the bits $S_a$, and thus knows the edges incident to $a_j$ (for $1 \le j \le k$) and $v_0^r$ (for $1 \le r \le q$). Similarly, Bob has $S_b$ and knows the edges incident to $b_j$ and $v_{\ell}^r$. Initially, Alice knows the state of $L_1$ (which includes $v_0^r$'s but not $v_{\ell}^r$'s) and Bob knows the state of $R_1$. Observe that if $i < \frac{\ell}{2}$, $V \setminus R_{t-1} \subseteq L_{t-1}$.
    For each round of the CONGEST algorithm, Alice generates the $p = O(\log n)$ messages from $V \setminus R_{t-1}$ that Bob requires to compute $R_{t}$ from the state of $R_{t-1}$. Similarly, Bob generates the $p$ messages from $V\setminus L_{t-1} \subseteq R_{t-1}$ that Alice requires to compute $L_{t}$ from $L_{t-1}$. Thus, to simulate one round, Alice and Bob exchange $p \cdot \Theta(\log n)$ bits ($\Theta(\log n)$ term is due to the CONGEST bandwidth). To simulate $R$ rounds of the CONGEST algorithm, Alice and Bob communicate $\Theta(Rp\log n)$ bits.

    After this simulation concludes, either Alice or Bob knows each of the $k$ distances $d(a_i,b_i)$, since some vertex has computed this distance. If there is any $i$ such that $d(a_i,b_i) =\ell+2$, using Lemma~\ref{lem:kssp:lbdist}, we can determine that the sets $S_a, S_b$ are not disjoint. If one of Alice or Bob discovers this, they can communicate it to the other with only one bit. Thus, we have obtained a $\Theta(Rp\log n)$ bit communication protocol to determine set disjointness of $(kq)$-bit inputs $S_a, S_b$.
\end{proof}

Now, we are ready to prove our main result:
\begin{theorem}
    Given a directed unweighted graph $G=(V,E)$, and vertices $a_i,b_i \in V$ for $1 \le i \le k$, computing the $k$ distances $d(a_i, b_i)$ requires $\Omega\left(\frac{\sqrt{nk}}{\log n} + D\right)$ rounds, even in graphs of undirected diameter $\Theta(\log n)$. This lower bound applies to any $\alpha$-approximation algorithm for $k$-SSP (for constant $\alpha > 1$).
    \label{thm:kssp:lb2}
\end{theorem}
\begin{proof}
    Assume that we have an algorithm $\mathcal{A}$ that computes $k$ distances in a $n$-vertex graph in $R(n)$ rounds. We choose parameters $p = \Theta(\log n), q = \frac{n-k}{\ell} \ge \frac{n}{2\ell}$ (since $k<n/2$), $\ell = \frac{\sqrt{nk}}{\log n}$, and construct $G$ as described above. The number of vertices in $G$ is $\Theta(\ell \cdot q + k) = n$ and the undirected diameter of $G$ is $\le 2p+4 = \Theta(\log n)$ (using paths through the tree).

    The number of bits in the set disjointness instance is $kq \ge \frac{kn}{2\ell}$, and we apply Lemma~\ref{lem:kssp:lbsd} to obtain a set disjointness protocol which communicates $\Theta(Rp\log n)$ bits.  Using the randomized communication lower bound for set disjointness~\cite{Bar-YossefJKS04}, we obtain: If $R(n) \le (\ell/2)$, then $R(n) \cdot \log^2 n = \Omega(kq) \Rightarrow R(n) = \Omega\left(\frac{kn}{\ell\log^2 n}\right)$. Combining the two bounds, $R(n) = \Omega\left(\min(\ell, \frac{kn}{\ell\log^2 n})\right)$. We chose $\ell = \frac{\sqrt{nk}}{\log n}$ to balance these two expressions, giving $R(n) = \Omega\left(\frac{\sqrt{nk}}{\log n}\right)$.

    Since we are distinguishing between finite and infinite distances in Lemma~\ref{lem:kssp:lbdist}, the lower bound applies for any $\alpha$-approximation algorithm that computes these $k$ distances.
\end{proof}

The directed $k$-SSP result in Theorem~\ref{thm:kssp:lb}.\ref{thm:kssp:lb:dir} stated in Section~\ref{sec:results:lb} follows directly from Theorem~\ref{thm:kssp:lb2}, since computing $k$-SSP from sources $a_i$ also computes $d(a_i,b_i)$. We prove the results for undirected weighted graphs in the following section.

\subsubsection{\texorpdfstring{Undirected Weighted $k$-SSP Lower Bound}{Undirected Weighted k-SSP Lower Bound}}
\label{sec:kssp:lbunw}

We now adapt our directed graph construction to prove our $k$-SSP lower bound for undirected weighted graphs. The main idea is to use edge weights instead of directed edges to obtain the same reduction from set disjointness.

\begin{figure}[t]
    \centering
    \tikzstyle{vertex}=[circle, draw=black,minimum size=20pt]
    \tikzstyle{vertexsm}=[circle, draw=black,minimum size=5pt]
    \scalebox{0.6}{
    \begin{tikzpicture}
    
        \node[vertexsm] (u00) at (7,0) {};
        \node[vertexsm] (u0p2) at (3,-2) {};
        \node[vertexsm] (u0p1) at (1,-3.5) {};
        \node[vertexsm] (u1p1) at (5,-3.5) {};
        \node[vertexsm] (u1d) at (11,-3.5) {};

        \node (ul) at (4,-1.5) {\dots};
        \node (ur) at (9,-1.5) {\dots};
        \node (V) at (8,-8) {\dots};
        \node (V) at (8,-7) {\dots};
        \node (V) at (8,-9) {\dots};
        \node (V) at (8,-10) {\dots};

        \node[vertex] (up0) at (0,-5) {$u_0$};
        \node[vertex] (up1) at (2.5,-5) {$u_1$};
        \node[vertex] (up2) at (5,-5) {$u_2$};

        \node[vertex] (upd) at (12,-5) {$u_\ell$};

        \node[vertex] (v10) at (0,-7) {$v_{0}^1$};
        \node[vertex] (v11) at (2.5,-7) {$v_{1}^1$};
        \node[vertex] (v12) at (5,-7) {$v_{2}^1$};
        \node[vertex] (v1d1) at (10,-7) {};
        \node[vertex] (v1d) at (12,-7) {$v_{\ell}^1$};

        \node[vertex] (v20) at (0,-8) {$v_{0}^2$};
        \node[vertex] (v21) at (2.5,-8) {$v_{1}^2$};
        \node[vertex] (v22) at (5,-8) {$v_{2}^2$};
        \node[vertex] (v2d1) at (10,-8) {};
        \node[vertex] (v2d) at (12,-8) {$v_{\ell}^2$};

        \node[vertex] (vg0) at (0,-10) {$v_{0}^q$};
        \node[vertex] (vg1) at (2.5,-10) {$v_{1}^q$};
        \node[vertex] (vg2) at (5,-10) {$v_{2}^q$};
        \node[vertex] (vgd1) at (10,-10) {};
        \node[vertex] (vgd) at (12,-10) {$v_{\ell}^q$};


        \node[vertex] (s1) at (-2.5,-6.5) {$a_1$};
        \node[vertex] (s2) at (-2.5,-7.5) {$a_2$};
        \node[vertex] (s3) at (-2.5,-8.5) {$a_3$};
        \node[vertex] (s4) at (-2.5,-10.5) {$a_k$};

        \node[vertex] (b1) at (14.5,-6.5) {$b_1$};
        \node[vertex] (b2) at (14.5,-7.5) {$b_2$};
        \node[vertex] (b3) at (14.5,-8.5) {$b_3$};
        \node[vertex] (b4) at (14.5,-10.5) {$b_k$};

        \path[draw,thick,-] (u0p2) edge (u0p1);
        \path[draw,thick,-] (u0p2) edge (u1p1);
        \path[draw,thick,-] (u0p1) edge (up0);
        \path[draw,thick,-] (u0p1) edge (up1);
        \path[draw,thick,-] (u1p1) edge (up2);

        \path[draw,very thick,-] (up0) edge [bend left] node[midway, left] {$5nW$} (v10);
        \path[draw,very thick,-] (up0) edge [bend left] node[midway, right] {$5nW$} (v20);
        \path[draw,very thick,-] (up0) edge [bend left] node[midway, right] {$5nW$} (vg0);

        \path[draw,very thick,-] (up1) edge [bend left] node[midway, left] {$5nW$} (v11);
        \path[draw,very thick,-] (up1) edge [bend left] node[midway, right] {$5nW$} (v21);
        \path[draw,very thick,-] (up1) edge [bend left] node[midway, right] {$5nW$} (vg1);
        \path[draw,very thick,-] (up2) edge [bend left] node[midway, left] {$5nW$} (v12);
        \path[draw,very thick,-] (up2) edge [bend left] node[midway, right] {$5nW$} (v22);
        \path[draw,very thick,-] (up2) edge [bend left] node[midway, right] {$5nW$} (vg2);

        \path[draw,very thick,-] (upd) edge [bend right] node[midway, right] {$5nW$} (v1d);
        \path[draw,very thick,-] (upd) edge [bend right] node[midway, left] {$5nW$} (v2d);
        \path[draw,very thick,-,] (upd) edge [bend right] node[midway, left] {$5nW$} (vgd) (vgd);

        \path[draw,thick,-] (v10) edge (v11);
        \path[draw,thick,-] (v11) edge (v12);
        \path[draw,thick,-] (v12) edge (6,-7);
        \path[draw,thick,-] (v1d1) edge (v1d);

        \path[draw,thick,-] (v20) edge (v21);
        \path[draw,thick,-] (v21) edge (v22);
        \path[draw,thick,-] (v22) edge (6,-8);
        \path[draw,thick,-] (v2d1) edge (v2d);

        \path[draw,thick,-] (vg0) edge (vg1);
        \path[draw,thick,-] (vg1) edge (vg2);
        \path[draw,thick,-] (vg2) edge (6,-10);
        \path[draw,thick,-] (vgd1) edge (vgd);

        \path[draw,thick,-] (u00) edge  (5.5,-1);
        \path[draw,thick,-] (u00) edge  (8.5,-1);
        \path[draw,thick,-] (upd) edge  (u1d);
        \path[draw,thick,-] (u1d) edge  (10,-5);
        \path[draw,thick,-] (u1d) edge  (10,-2);

        \path[draw,thick,-, dashed] (s1) edge node[midway, above] {$W$} (v10);
        \path[draw,thick,-, dashed] (s1) edge node[midway, above] {$W$} (v20);
        \path[draw,thick,-, dashed] (s1) edge node[midway, above] {$W$} (vg0);

        \path[draw,thick,-, dashed] (s2) edge (v10);
        \path[draw,thick,-, dashed] (s2) edge (v20);
        \path[draw,thick,-, dashed] (s2) edge (vg0);

        \path[draw,thick,-, dashed] (s4) edge node[midway, above] {$ $} (v10);
        \path[draw,thick,-, dashed] (s4) edge node[midway, below] {$W$} (v20);
        \path[draw,thick,-, dashed] (s4) edge node[midway, below] {$W$} (vg0);

        \path[draw,thick,-, dashed] (v1d) edge node[midway, above] {$W$} (b1);
        \path[draw,thick,-, dashed] (v2d) edge node[midway, above] {$W$} (b1);
        \path[draw,thick,-, dashed] (vgd) edge node[midway, above] {$W$} (b1);
        \path[draw,thick,-, dashed] (v1d) edge (b2);
        \path[draw,thick,-, dashed] (v2d) edge (b2);
        \path[draw,thick,-, dashed] (vgd) edge (b2);
        \path[draw,thick,-, dashed] (v1d) edge node[midway, above] {$ $} (b4);
        \path[draw,thick,-, dashed] (v2d) edge node[midway, below] {$W$} (b4);
        \path[draw,thick,-, dashed] (vgd) edge node[midway, below] {$W$} (b4);
    \end{tikzpicture}
    }
    \caption{Lower bound construction for undirected weighted $k$-SSP. Thick edges (between tree leaves and path vertices) all have weight $5nW$, and dashed edges all have weight $W$ ($W = O(n)$ is a weight parameter).}
    \label{fig:kssp:lbunw}
\end{figure}

\textbf{Graph Construction:} We construct an undirected weighted graph $G=(V,E)$ shown in Figure~\ref{fig:kssp:lbunw} by modifying Figure~\ref{fig:kssp:lb}. The main idea is to use weighted edges instead of directed edges to obtain a reduction from set disjointness. The vertices and edges are the same in both figures, but all the edges are undirected and are weighted as follows, based on a weight parameter $W = O(n)$ to be chosen later: The edges between leaf vertices and path vertices $(u_i,v^j_i)$ have weight $5nW$, the edges between source/sink vertices and path vertices $(a_i,v^j_0)$ and $(v^j_{\ell},b_i)$ have weight $W$, and all other edges have unit weight. As in the directed construction, the edge $(a_i,v^j_0)$ is present only if $S_a[i,j]=1$, and $(v^j_{\ell},b_i)$ is present only if $S_b[i,j]=1$ (with weight $W$).

We prove the following lemma that is the basis of our reduction from set disjointness, similar to Lemma~\ref{lem:kssp:lbdist}.

\begin{lemma}
    In $G$, $d(a_i,b_i) =\ell+2W$ if there exists a $j$, $1 \le j \le q$, such that $S_a[i][j]=S_b[i][j]=1$ and $d(a_i,b_i) \ge \ell + 4W$ otherwise. \label{lem:kssp:lbdistunw}
\end{lemma}
\begin{proof}
    First, if $S_a[i][j]=S_b[i][j]=1$, then both edges $(a_i,v^j_0)$ and $(v^j_{\ell},b_i)$ are present, and we have a path from $a_i$ to $b_i$ using the $v^j_0$-$v^j_{\ell}$ path with weight $\ell+ 2W$. It can be readily seen that this is the shortest path (since using an edge from path to leaf would incur distance $5nW > \ell+2W$), and thus $d(a_i,b_i) =\ell+2W$.

    Now, assume that there is no $j$, $1 \le j \le q$, such that $S_a[i][j]=S_b[i][j]=1$. We need to prove $d(a_i,b_i) \ge \ell + 4W$. Consider a shortest path from $a_i$ to $b_i$ in this case. If this path uses a node to leaf edge, it has weight $> 5nW > \ell+4W$. If it doesn't, then this shortest path must use one of the paths $v_0^j$-$v_{\ell}^j$ in its entirety.

    Assume that the $a_i$-$b_i$ shortest path uses $v_0^j$-$v_{\ell}^j$ for some $j$. This part of the path incurs weight $\ell$, and must be connected to $a_i$ and $b_i$ via a $a_i$-$v_0^j$ path and a $b_i$-$v_{\ell}^j$ path. Since we have assumed $S_a[i][j]=0$ or $S_b[i][j]=0$, both edges  $(a_i,v^j_0)$ and $(v^j_{\ell},b_i)$ cannot exist in the graph. Assume $(a_i,v^j_0)$ doesn't exist. Then, the shortest path from $a_i$ to $v^j_0$ must use an edge from $a_i$ to a different $v^z_0$ ($z \ne j$), and then two or more weight $W$ edges to reach $v^j_0$. Thus, the $a_i$-$v^j_0$ path has weight at least $3W$. Combining this with the $b_i$-$v_{\ell}^j$ path of weight at least $W$, we get a total weight $\ell + 4W$ for the shortest path. Thus, $d(a_i,b_i) \ge \ell + 4W$.
\end{proof}

\begin{theorem}
    Given a undirected weighted graph $G=(V,E)$, and vertices $a_i,b_i \in V$ for $1 \le i \le k$, computing the $k$ distances $d(a_i, b_i)$ requires $\Omega\left(\frac{\sqrt{nk}}{\log n} + D\right)$ rounds, even in graphs of undirected diameter $\Theta(\log n)$. This lower bound applies for any $(2-\epsilon)$-approximation algorithm for $k$-SSP.
    \label{thm:kssp:lb2unw}
\end{theorem}
\begin{proof}
    We prove the following analogue of the directed Lemma~\ref{lem:kssp:lbsd} for our undirected weighted construction using Lemma~\ref{lem:kssp:lbdistunw}:

    {\textbf{Claim: }} Given an algorithm $\mathcal{A}$ for undirected weighted $k$-SSP that computes $(2-\epsilon)$ approximation of $k$ distances $d(a_i,b_i)$ in $R \le (\ell/2)$ rounds, 
    we can solve $(kq)$-bit set disjointness using $Rp \cdot \Theta(\log n)$ bits, where $p$ is the height of the tree.

    {\textbf{Proof of Claim: }}
    If the set disjointness input is not disjoint, we have $d(a_i,b_i) =\ell+2W$ for some $i, 1 \le i \le k$. If the set disjointness input is disjoint, we have $d(a_i,b_i) \ge \ell+4W$. Even if $\mathcal{A}$ is a $(2-\epsilon)$-approximation algorithm, it can distinguish between these two cases: In the first case, the output is $\le (2-\epsilon)(\ell+2W)$, and in the second case the output is $ \ge \ell+4W$. We choose $W = \ell(1-\epsilon)/(2\epsilon) = O(n/\epsilon)$ to ensure that $(2-\epsilon)(\ell+2W) < \ell+4W$, which makes the two cases distinguishable.

    Returning to the main proof, the above claim shows that we can solve set disjointness using the output of the $k$ shortest path queries in the constructed undirected weighted graph. Using a proof similar to Theorem~\ref{thm:kssp:lb2} gives us a $\Omega\left(\frac{\sqrt{nk}}{\log n} + D\right)$ lower bound for any algorithm $\mathcal{A}$ that computes $(2-\epsilon)$-approximate shortest paths between $k$ pairs.
\end{proof}

The undirected weighted $k$-SSP result in Theorem~\ref{thm:kssp:lb}.\ref{thm:kssp:lb:undir} stated in Section~\ref{sec:results:lb} follows from Theorem~\ref{thm:kssp:lb2unw}. For exact undirected unweighted $k$-SSP, optimal algorithms taking $O(k+D)$ rounds were already known~\cite{LenzenPP19,HoangPDGYPR19}.

\subsection{Lower Bound for DSO}
\label{sec:dso:lb}

We now present a CONGEST lower bound of $\tilde{\Omega}(\sqrt{nk}+D)$ for computing answers to $k$ seb-DSO queries in directed unweighted graphs (which trivially applies to directed weighted graphs), stated in Theorem~\ref{thm:dso:lb}. As with our $k$-SSP lower bound (Theorem~\ref{thm:kssp:lb} in Section~\ref{sec:results:lb}), this result also extends to undirected weighted graphs, as we detail in Section~\ref{sec:dso:lbunw}.
Similar to the $k$-SSP problem, these lower bounds apply to algorithms where $d(x,y)$ may be computed at any node in $G$. 

\begin{figure}[t]
    \centering
    \includegraphics[scale=0.6]{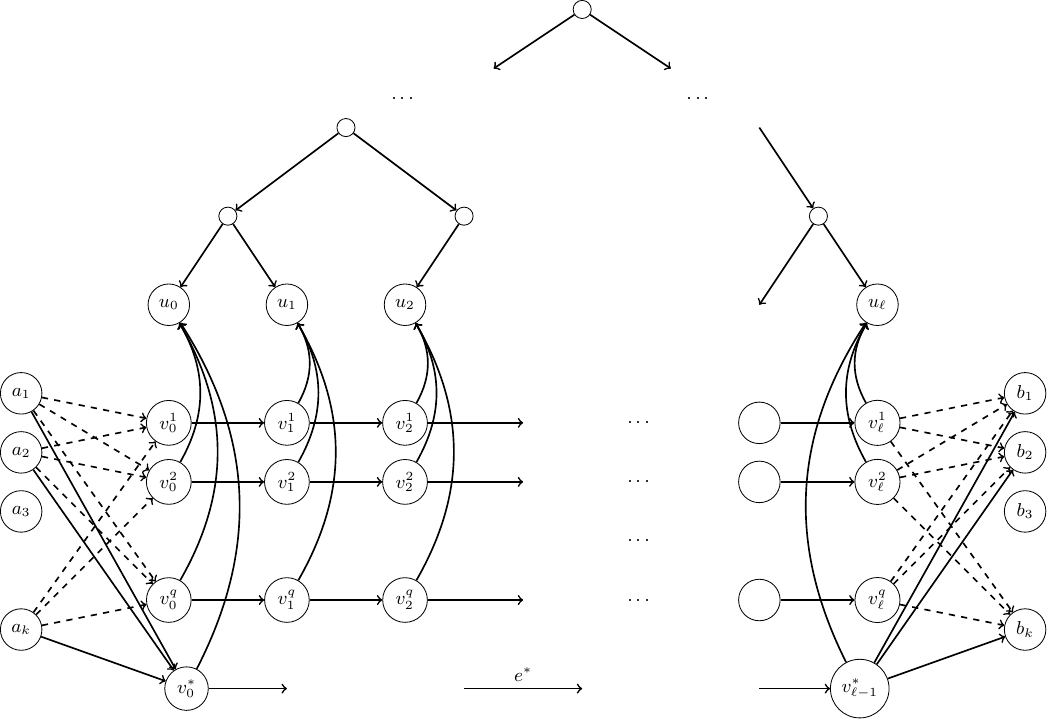}
    \caption{Lower bound for answering $k$ DSO queries in directed unweighted graphs}
    \label{fig:dso:lb}
\end{figure}

We build on the lower bound construction for directed $k$-SSP from Section~\ref{sec:kssp:lb}. We cannot directly connect the $a_i,b_i$ vertices with an added edge, since that would decrease the undirected diameter of the new graph and introduce additional pathways of communication, breaking the lower bound proof. So, we add an additional long path that connects $a_i, b_i$ vertices, as depicted in Figure~\ref{fig:dso:lb}. When an edge $e^*$ on this path is removed, the new shortest path corresponds to $d(a_i,b_i)$ in the original $k$-SSP lower bound construction. Thus, answering $k$ DSO queries $d(a_i,b_i,e^*)$ corresponds to computing $k$ shortest path distances $d(a_i,b_i)$ in the unmodified graph, leading to the following proof of Theorem~\ref{thm:dso:lb} (stated in Section~\ref{sec:results:lb}).

\thmdsolb*
\begin{proof}
    We use the construction in Figure~\ref{fig:dso:lb}, which modifies the construction in Theorem~\ref{thm:kssp:lb2} by adding an additional path $v^*_0$-$v^*_{\ell-1}$ of length $\ell-1$. This path is connected to vertices $a_i,b_i$ and the leaves $u_i$ as shown in the figure. 
    
    Note that $d(a_i,b_i)=\ell+1$ due to the $v^*_0$-$v^*_{\ell-1}$ path. However, when an edge $e^*$ on the $v^*_0$-$v^*_{\ell-1}$ path is removed, a replacement path from $a_i$ to $b_i$ with weight $\ell+2$ exists through $v^j_0$-$v^j_{\ell}$ if $S_a[i,j]=S_b[i,j]=1$ for some $j$, i.e., $d(a_i,b_i,e^*)=\ell+2$. 
Otherwise, if no such $j$ exists (i.e., for all $j$, $S_a[i,j]=0$ or $S_b[i,j]=0$), then for every path $v^j$ at least one of the edges $(a_i,v^j_0)$ or $(v^j_{\ell},b_i)$ is missing. In this case, no $a_i$-$b_i$ replacement path exists and $d(a_i,b_i,e^*) = \infty$. We thus prove the following claim, similar to Lemma~\ref{lem:kssp:lbdist}.

    \textbf{Claim 1:} Let $e^*$ be an edge on the $v^*_0$-$v^*_{\ell-1}$ path. $d(a_i,b_i,e^*) =\ell+2$ if $S_a[i][j]=S_b[i][j]=1$ and $d(a_i,b_i,e^*) = \infty$ otherwise. 

    Following a proof similar to Lemma~\ref{lem:kssp:lbsd}, building on the analysis of(\cite{SarmaHKKNPPW12}, Theorem~3.1), we get the following result.

    \textbf{Claim 2:} If there is an algorithm $\mathcal{A}$ that computes answer to $k$ queries $d(a_i,b_i,e^*)$ (all for the same failed edge $e^*$) in $R \le (\ell/2)$ rounds, then we can solve $(kq)$-bit set disjointness using $Rp \cdot \Theta(\log n)$ bits, where $p$ is the height of tree. 

    Applying the randomized communication lower bound for set disjointness, with parameters $p = \Theta(\log n), q = \frac{n-k}{\ell} \ge \frac{n}{2\ell}$ (since $k<n/2$), $\ell = \frac{\sqrt{nk}}{\log n}$, we get a lower bound of $\Omega\left(\frac{\sqrt{nk}}{\log n}\right)$. 
    
    For more details on Claim 2 and how we choose our parameters, see the proof of Theorem~\ref{thm:kssp:lb2}.

    Similar to Theorem~\ref{thm:kssp:lb2} for $k$-SSP, since we distinguish between finite and infinite distances in Claim 1, this lower bound holds for any approximate algorithm for DSO. Since Claim 1 is for computing replacement path distances for the same failed edge $e^*$, the lower bound applies to seb-DSO (and hence trivially for general DSO).
\end{proof}

\subsubsection{Undirected Weighted DSO Lower Bound.}
\label{sec:dso:lbunw}
The DSO lower bound proof in Theorem~\ref{thm:dso:lb} can be adapted to work for exact DSO in undirected weighted graphs as well. This also applies to $(2-\epsilon)$-approximation of undirected weighted DSO (unlike the directed result, where the lower bound holds for any constant approximation). The main idea is to replace the directed edges in Figure~\ref{fig:dso:lb} with weighted undirected edges, such that we achieve a separation in DSO distances similar to Claim 1 in the above proof of Theorem~\ref{thm:dso:lb}. Since this construction exactly follows the modifications we made for the $k$-SSP lower bound in Section~\ref{sec:kssp:lbunw}, we do not repeat it here for DSO. 

\subsubsection{Further DSO Lower Bounds.}
\paragraph{DSO without preprocessing}
The lower bound in Theorem~\ref{thm:dso:lb} with $P(n)=0$ shows that any algorithm that answers $k$ seb-DSO queries with no preprocessing requires $\tilde{\Omega}(\sqrt{nk}+D)$ rounds. Along with the upper bound result in Observation~\ref{thm:dso:seb} (in Section~\ref{sec:dso:sebintro}), this shows that both the CONGEST upper and lower bounds for seb-DSO without preprocessing match $k$-SSP. For $k \ge n^{1/3}$ in directed unweighted graphs, we have optimal bounds, achieving $\tilde{\Theta}(\sqrt{nk}+D)$ rounds. A similar tight relation between seb-DSO and $k$-SSP bounds holds for undirected graphs, in both the unweighted~\cite{LenzenPP19,HoangPDGYPR19} and the $(1+\epsilon)$-approximate weighted case~(\cite{ElkinN19},Theorem~\ref{thm:kssp:lb2unw} in Section~\ref{sec:kssp:lbunw}).

\begin{corollary}
    \label{lem:seblb}
    Any DSO algorithm that answers $k$ seb-DSO queries in a graph $G=(V,E)$ with no preprocessing requires $\tilde{\Omega}(\sqrt{nk}+D)$ rounds even if $G$ has undirected diameter $\Theta(\log n)$.
\end{corollary}

\paragraph{DSO with preprocessing}
We assume query costs similar to our DSO algorithms from Section~\ref{sec:dso:algs}, to obtain the lower bounds in Theorem \ref{thm:dso:prelb2} (restated below). The bounds follow from Theorem~\ref{thm:dso:lb} by choosing the appropriate $k$ specified below. In both cases, there is a $\sqrt{n}$ gap in preprocessing rounds for our upper and lower bounds.

\thmdsoprelbb*

The following Theorem~\ref{thm:dso:prelb} gives lower bounds for any DSO algorithm that takes $Q(n)$ rounds per query independent of $k$. It applies for $Q(n) = o\left( \frac{\sqrt{n}}{\log n}\right)$, and covers the salient range of query response complexities since the current best CONGEST lower bound for SSSP is $\Omega\left(\frac{\sqrt{n}}{\log n}+D\right)$ and SSSP can be used to answer queries with no preprocessing. Note that we could have derived Theorem~\ref{thm:dso:prelb2} as a specific case of Theorem~\ref{thm:dso:prelb}, by choosing $Q(n)$ to be $O(1)$ or $ {O}(\sqrt{n}/\log^2 n)$.

\begin{restatable}{theorem}{thmdsoprelb}
Consider a DSO algorithm for directed unweighted graph $G=(V,E)$ on $n$ nodes with diameter $D=\Omega(\log n)$ that performs $P(n)$ rounds of preprocessing. Assume the algorithm can answer a batch of $k$ queries in $O(kQ(n)+D)$ rounds for some fixed function $Q(n) = o\left( \frac{\sqrt{n}}{\log n}\right)$, and for a particular choice of $k= \frac{n}{Q(n)^2\log^3 n}$. Then, $P(n) = \omega(\frac{n}{Q(n)\log^3 n})$.
    \label{thm:dso:prelb}
\end{restatable}
\begin{proof}
    Applying Theorem~\ref{thm:dso:lb}, for any $k$, the algorithm takes $\Omega(\frac{\sqrt{nk}}{\log n})$ rounds on some graph $G$ with diameter $D=O(\log n)$. By our assumption, the algorithm takes $\le P(n) + O(kQ(n)+D)$ rounds for preprocessing combined with answering $k$ queries. So, $P(n) + k \cdot Q(n) = \Omega\left(\frac{\sqrt{nk}}{\log n}\right)$. 
    
    Thus, $P(n) = \Omega\left(\frac{\sqrt{nk}}{\log n} - kQ(n) - D\right)$. We choose $k$ such that $\frac{\sqrt{nk}}{\log n} = \omega(kQ(n))$, e.g. $k = \frac{1}{\log n} \cdot \frac{n}{Q(n)^2 \log^2n}$. Note that $k \ge 1$ due to our condition on $Q$.
    With this choice, $P(n) = \Omega(\frac{\sqrt{nk}}{\log n}) = \Omega(\frac{n}{Q(n)\log^2 n \sqrt{\log n}}) = \omega(\frac{n}{Q(n)\log^3 n})$.
\end{proof}

Theorems~\ref{thm:dso:prelb}, \ref{thm:dso:prelb2} address DSO algorithms with fixed cost per query. We also obtain lower bounds for two other query cost models, that capture other DSO query costs.

We consider a `batched' query model, where a DSO algorithm is designed for a particular batch size $\beta \ge 1$. Consider a DSO algorithm that computes the answers to $k$ queries in $Q'(n,k)$ rounds if $k \le \beta$, and if $k > \beta$, it takes $O(\frac{k}{\beta} Q'(n,\beta))$ rounds: This corresponds to processing queries $\beta$ at a time if $k > \beta$.
We obtain lower bounds for this model directly from Theorem~\ref{thm:dso:prelb} by setting $Q(n) = \frac{Q'(n,\beta)}{\beta}$.

Another query cost model we consider is of the form $Q(n,k) = O(\sqrt{n} k ^\alpha)$ for some constant $0 \le \alpha < \frac{1}{2}$. This is inspired by the fact that the seb-DSO algorithm with {\it no preprocessing} in Section~\ref{sec:dso:nopre} achieves $\alpha=1/2$, taking $\tilde{O}(\sqrt{nk})$ rounds when $k \ge n^{1/3}$. 
We show that achieving $\alpha < 1/2$ for general batched DSO for all $k$ requires $\Omega(n)$ rounds of preprocessing, using our general DSO lower bound from Theorem~\ref{thm:dso:lb}.

\begin{corollary} 
    If a DSO algorithm can answer $k$ queries in $O(\sqrt{n} k ^\alpha + D)$ rounds, for any $1 \le k \le n$, and for some fixed  constant $\alpha$, where $0 \le \alpha < \frac{1}{2}$, then the algorithm must use $\Omega(\frac{n}{\log n})$ preprocessing rounds.
\end{corollary}
\begin{proof}
    If $P(n)$ is the preprocessing cost, apply Theorem~\ref{thm:dso:lb} to get $P(n) + \sqrt{n} k^\alpha = \Omega(\frac{\sqrt{nk}}{\log n})$, which gives us a lower bound of $P(n) = \Omega(\frac{\sqrt{nk}}{\log n})$. Choosing $k=n$ gives a lower bound of $P(n) = \Omega(\frac{n}{\log n})$.
\end{proof}
\section{All Pairs Second Simple Shortest Paths (2-APSiSP)}
\label{sec:apsisp}
In this section, we prove our nearly optimal $\tilde{\Theta}(n)$ upper and lower bounds for 2-APSiSP in CONGEST.

\subsection{2-APSiSP Algorithm}
\label{sec:apsisp:ub}

We present Algorithm~\ref{alg:capsisp} for computing 2-APSiSP in directed weighted graphs in $\tilde{O}(n)$ rounds, 
proving Theorem~\ref{thm:apsisp:ub}. Our algorithm builds on a sequential 2-APSiSP algorithm from~\cite{AgarwalR16sisp}.
The results in~\cite{AgarwalR16sisp} are for the more general problem of computing $k$ simple shortest paths between all pairs of vertices ($k$-APSiSP). The $k$-APSiSP algorithm in~\cite{AgarwalR16sisp} adapts a path extension technique used in~\cite{DemetrescuI04} for fully dynamic APSP, and it runs in $\tilde{O}(mn)$ time for $k=2$. The running time increases with $k$, and is no longer $\tilde{O}(mn)$ for $k> 2$. We focus on $k=2$ here.

The algorithm is based on the following characterization of 2-SiSP from~\cite{AgarwalR16sisp}.

\begin{lemma}
    \label{lem:apsisp}~\cite{AgarwalR16sisp}
    Given $x,y \in V$, let $a$ be the vertex after $x$ on the $x$-$y$ shortest path. For $a \ne y$,
    2-SiSP distance $d_2(x,y) = \min\left( d(x,y,(x,a)), w(x,a) + d_2(a,y) \right)$
\end{lemma}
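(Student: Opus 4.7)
My plan is to establish the equality by proving both inequalities. Denote the right-hand side by $R := \min\!\left(d(x,y,(x,a)),\, w(x,a) + d_2(a,y)\right)$. Throughout I use $P_{xy}[a..y]$ as the reference shortest $a$-$y$ path for defining $d_2(a,y)$, which is legitimate because $d_2$ is independent of the choice of reference shortest path.

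For $d_2(x,y) \ge R$, I take any $2$-SiSP $Q$ from $x$ to $y$ and case-split on whether it uses $(x,a)$. If $Q$ avoids $(x,a)$, then $Q$ is a simple $x$-$y$ path in $G - \{(x,a)\}$, so $|Q| \ge d(x,y,(x,a))$. Otherwise $(x,a)$ is necessarily the first edge of $Q$, and writing $Q = (x,a) \cdot Q'$, the tail $Q'$ is a simple $a$-$y$ path; since $Q \ne P_{xy}$ and they share their first edge, $Q' \ne P_{xy}[a..y]$, so $Q'$ is a valid $2$-SiSP candidate for $(a,y)$ and $|Q| = w(x,a) + |Q'| \ge w(x,a) + d_2(a,y)$.

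For $d_2(x,y) \le R$, I exhibit a $2$-SiSP candidate matching each term on the right. Any shortest $x$-$y$ path in $G - \{(x,a)\}$ is simple and differs from $P_{xy}$ (it avoids $(x,a)$), yielding $d_2(x,y) \le d(x,y,(x,a))$. For the second term, let $Q'$ be a $2$-SiSP from $a$ to $y$, so $|Q'| = d_2(a,y)$ and $Q' \ne P_{xy}[a..y]$. If $x \notin Q'$, then $(x,a) \cdot Q'$ is a simple $x$-$y$ path differing from $P_{xy}$ in its tail, giving $d_2(x,y) \le w(x,a) + d_2(a,y)$. If $Q'$ passes through $x$, decompose $Q' = Q_1 \cdot Q_2$ with $Q_2$ the $x$-to-$y$ suffix; here $Q_2$ cannot begin with the edge $(x,a)$, for otherwise $Q'$ would contain $a$ twice---once as its starting vertex and once as the vertex following $x$---breaking simplicity. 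Hence $Q_2 \ne P_{xy}$ is itself a valid $2$-SiSP candidate for $(x,y)$, and non-negative edge weights yield $d_2(x,y) \le |Q_2| \le w(x,a) + |Q_1| + |Q_2| = w(x,a) + d_2(a,y)$.

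The only delicate case is this last one: when the $2$-SiSP $Q'$ from $a$ to $y$ loops back through $x$, the natural concatenation $(x,a) \cdot Q'$ fails to be simple and cannot be used directly. The resolution above relies on simplicity of $Q'$ itself to forbid $Q_2$ from restarting with the edge $(x,a)$, at which point the $x$-to-$y$ suffix of $Q'$ alone suffices as a candidate and non-negativity of edge weights absorbs the discarded prefix $Q_1$ and the edge $(x,a)$, closing the inequality.
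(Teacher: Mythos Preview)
The paper does not supply its own proof of this lemma; it is imported verbatim from~\cite{AgarwalR16} and used as a black box in the 2-APSiSP algorithm. So there is no in-paper argument to compare against.

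Your proof is correct. The two-inequality structure is the natural one, and you handle the only genuinely delicate step: in the direction $d_2(x,y) \le w(x,a) + d_2(a,y)$, the obvious candidate $(x,a)\cdot Q'$ may fail to be simple when $Q'$ revisits $x$. Your fix---extracting the $x$-to-$y$ suffix $Q_2$ of $Q'$ and observing that simplicity of $Q'$ forbids $Q_2$ from beginning with the edge $(x,a)$ (since $a$ is already the first vertex of $Q'$)---is exactly what is needed, and non-negativity of weights then absorbs the discarded prefix. The remaining cases are routine. One small remark: in the direction $d_2(x,y)\le d(x,y,(x,a))$ you implicitly use that a shortest $x$-$y$ path in $G-\{(x,a)\}$ can be taken simple; this is fine under non-negative weights but worth stating, and if no such path exists the term is $\infty$ and the inequality is vacuous.
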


Lemma~\ref{lem:apsisp} shows that we can classify 2-SiSP paths into two types: a Type-A path, where the 2-SiSP distance $d_2(x,y)$ is the  distance $d(x,y,(x,a))$ that avoids the first edge $(x,a)$ on $x$-$y$ shortest path, and a Type-B path that uses edge $(x,a)$ but differs from the $x$-$y$ shortest path on a different edge. The 2-APSiSP algorithm computes these two types using different methods.

\begin{algorithm}[!ht]
    \caption{\congest 2-APSiSP Algorithm}
    \begin{algorithmic}[1]
        \Require Directed weighted graph $G=(V,E)$.
        \Ensure Compute 2-SiSP distance $d_2(x,y)$ at $y$ for each $x,y \in V$.
        \State Compute APSP in $G$ and $G^r$, remembering parent nodes. \label{alg:capsisp:apsp}
        \State  For each vertex $y \in V$, initialize $H_y = \emptyset$. $H_y$ is a priority queue for Type-B paths \textit{ending} at $y$. $H_y$ will contain keys $x \in V$ with priority $D$ representing $x$-$y$ 2-SiSP of weight $D$. \rightComment{$H_y$ is used in line~\ref{alg:capsisp:type2f}, lines~\ref{alg:capsisp:secst}-\ref{alg:capsisp:secen}.} 
        \ForAll{vertex $x \in V$} \label{alg:capsisp:initst}
            \State Find all neighbors $a$ such that edge $(x,a)$ is a shortest path. 
            \State Perform exclude computation with source $x$, and edges $(x,a)$ as the excluded set of independent paths using Algorithm~\ref{alg:exclude} (from Section~\ref{sec:exclude}): Distance $d(x,y,(x,a))$ is computed at $y$. \label{alg:capsisp:exclude} \rightComment{Used in lines~\ref{alg:capsisp:excludeset},\ref{alg:capsisp:type2f}}
            \State  Perform a downcast on $T_x$ (out-shortest path tree for source $x$), propagating $(x,a)$ and $w(x,a)$ to each vertex $y$ in the subtree rooted at $a$. \rightComment{Propagate first-edge info, used in lines~\ref{alg:capsisp:extendcreate},\ref{alg:capsisp:type2f}}  \label{alg:capsisp:downcast}
            \For{vertex $a$} 
            \For{vertex $y \in V$ in subtree of $T_x$ rooted at $a$} 
                \lineComment{Lines~\ref{alg:capsisp:excludeset}-\ref{alg:capsisp:initen}: Local computation at $y$. Sets Type-A distance in line~\ref{alg:capsisp:excludeset} and initializes Type-B distance computation.}
                \State Let $d_2^*(x,y) \gets d(x,y,(x,a))$ be the current estimate of 2-SiSP distance. \rightComment{Set Type-A distance, computed in line~\ref{alg:capsisp:exclude}} \label{alg:capsisp:excludeset}
                \State Add $(x,a)$ to $Extensions(a,y)$. \rightComment{Used in line~\ref{alg:capsisp:extend} to extend Type-B paths } \label{alg:capsisp:extendcreate}
                \lineComment{Add Type-B candidate for $x$-$y$ 2-SiSP to $H_y$.}
                \State If $(a,b)$ is the first edge of $a$-$y$ shortest path and $b \ne y$, add key $x$ to $H_y$ with weight $w(x,a) + d(a,y,(a,b))$. \rightComment{Note that $(a,b)$ has been propagated from the downcast with source $a$ from line~\ref{alg:capsisp:downcast}, and $d(a,y,(a,b))$ is computed at $y$ in line~\ref{alg:capsisp:exclude}.} \label{alg:capsisp:initen} \label{alg:capsisp:type2f}
            \EndFor
            \EndFor
        \EndFor 
        \lineComment{Lines \ref{alg:capsisp:secst}-\ref{alg:capsisp:secen}: Compute shortest Type-B 2-SiSP distance}
        \For{vertex $y \in V$} \label{alg:capsisp:secst}
        \lineComment{Local computation at $y$ to compute $d_2(x,y)$ for all $x \in V$.}
            \While{$H_y \ne \emptyset$}
            \State Let $(x,D) \gets \textsc{Extract-min}(H_y)$ \label{alg:capsisp:extract}
            \If{$D \ge d_2^*(x,y)$}
                \State Set $d_2(x,y) \gets d_2^*(x,y)$ \rightComment{Type-A path is shortest.} \label{alg:capsisp:found1}
            \Else
                \State Set $d_2(x,y) \gets D$ \rightComment{Type-B path is shortest.} \label{alg:capsisp:found}
                \For{$(x',x) \in Extensions(x,y)$}
                    \State Add key $x'$ with weight $w(x',x)+d_2(x,y)$ to $H_y$. \rightComment{Update Type-B candidate path distances for $x'$-$y$ 2-SiSP.} \label{alg:capsisp:extend} \label{alg:capsisp:secen}
                \EndFor
            \EndIf
            \EndWhile
        \EndFor
    \end{algorithmic}
    \label{alg:capsisp}
\end{algorithm}

We present our distributed 2-APSiSP method in Algorithm~\ref{alg:capsisp}. Our algorithm is a CONGEST implementation of a streamlined version of the $k$-APSiSP algorithm in~\cite{AgarwalR16sisp}, simplified for $k=2$. 
We make a few modifications to enable a distributed implementation: 
We perform a downcast through each shortest path tree to facilitate availability of information at the right nodes.
We break down the global priority queue used for Type-B distances into a separate priority queue for each end vertex in a shortest path, which allows us to independently compute distances locally.  These concerns are not present in the sequential setting, where information is available globally. 
As in our other algorithms, we only output 2-SiSP distances and not the entire path.

We now briefly describe the steps in Algorithm~\ref{alg:capsisp} and then address correctness and round bound in Theorem~\ref{thm:apsisp:ub}. We first perform APSP in line~\ref{alg:capsisp:apsp} to compute shortest path trees. We then initialize the queues $H_y$, which will contain candidate Type-B paths ending at vertex $y$. Lines~\ref{alg:capsisp:initst}-\ref{alg:capsisp:initen} are performed at each vertex simultaneously.

After identifying the first edges $(x,a)$ in the shortest path tree $T_x$ rooted at $x$, we compute Type-A path distances in lines~\ref{alg:capsisp:exclude} and~\ref{alg:capsisp:excludeset}. Line~\ref{alg:capsisp:exclude} involves $n$ exclude computations --- one from each vertex $x \in V$. Using Theorem~\ref{thm:dso:exclude} from Section~\ref{sec:exclude}, this takes $\tilde{O}(n)$ rounds.
We make use of downcasts in line~\ref{alg:capsisp:downcast} to propagate the first-edge information to vertices in $T_x$. This allows each node $y$ in the subtree of $T_x$ rooted at $a$ to learn the relevant first edge $(x,a)$ on the $x$-$y$ path, which is needed for Type-B computations.

For Type-B paths, we use a priority queue $H_y$ for candidate paths ending at vertex $y \in V$; these replace the global priority queue used in~\cite{AgarwalR16sisp}. This computation is entirely local at each $y$, using information received during the exclude (line~\ref{alg:capsisp:exclude}) and downcast (line~\ref{alg:capsisp:downcast}) phases.
The first path added to $H_y$ is in line~\ref{alg:capsisp:type2f}, which avoids the second edge on the shortest path (if one exists). Other candidates are added in line~\ref{alg:capsisp:extend}: Whenever a improved Type-B distance is found in line~\ref{alg:capsisp:found}, it is used to propagate new Type-B candidate paths for other sources $x'$ (identified in line~\ref{alg:capsisp:extendcreate}). Otherwise, if the Type-A distance is shortest and $d_2$ is set in line~\ref{alg:capsisp:found1}, the correct Type-B path has already been added in line~\ref{alg:capsisp:type2f}.

\thmapsispub*

\begin{proof}
    \textbf{Correctness:} Correctness follows from Lemma~\ref{lem:apsisp} and arguments in Lemma 2.2 in \cite{AgarwalR16sisp} that builds on the path extension method in~\cite{DemetrescuI04}: Type-A path distances are correctly computed in lines~\ref{alg:capsisp:exclude},\ref{alg:capsisp:excludeset}. The minimum distance found in line~\ref{alg:capsisp:found} is the correct shortest Type-B distance, since all smaller 2-SiSP distances of paths ending in $y$ have already been found. Thus, the shortest Type-B distance had been added to $H_y$ in line~\ref{alg:capsisp:extend} or line~\ref{alg:capsisp:type2f} before it is extracted in line~\ref{alg:capsisp:extract}. 

    \textbf{Round Complexity:} APSP in line~\ref{alg:capsisp:apsp} takes $\tilde{O}(n)$ rounds using the algorithm of~\cite{BernsteinN19apsp}. We perform exclude computations from $n$ sources in line~\ref{alg:capsisp:exclude} which takes $\tilde{O}(n)$ rounds using Theorem~\ref{thm:dso:exclude}. The downcast in line~\ref{alg:capsisp:downcast} takes $O(n)$ rounds and $O(1)$ congestion per edge, so the computation for all vertices can be randomly scheduled in $\tilde{O}(n)$ rounds. The rest of the computation is performed locally at each $y$. Thus, the total round complexity is $\tilde{O}(n)$ rounds.
\end{proof}

\subsection{2-APSiSP Lower Bound}
\label{sec:apsisp:lb}
We prove a lower bound of $\tilde{\Omega}(n)$ for 2-APSiSP, proving Theorem~\ref{thm:apsisp:lb}.
The $k$-SSP lower bound result in Theorem~\ref{thm:kssp:lb}, with $k=\Theta(n)$, can be modified to give a $\tilde{\Omega}(n)$ lower bound for 2-APSiSP. However, this result would only apply to directed graphs with diameter $\Omega(\log n)$. The lower bound presented below holds for undirected unweighted graphs with diameter as low as $\Theta(1)$. In fact, it holds even when APSP distances in the input graph are known, i.e., vertex $y \in V$ knows distances $d(x,y)$ for all $x \in V$.

\thmapsisplb*
\begin{proof}
    \begin{figure}[t]
        \centering
        \tikzstyle{vertex}=[circle, draw=black,minimum size=16pt]
        \includegraphics[scale=0.6]{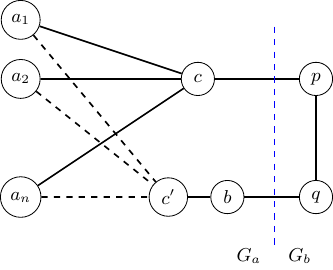}
        \caption{2-APSiSP lower bound.}
        \label{fig:apsisplb2}
    \end{figure}
    Consider the undirected unweighted graph construction $G=(V,E)$ in Figure~\ref{fig:apsisplb2}, based on a $n$-bit set disjointness instance $S_a, S_b$. Solid edges are always present, but for each $1 \le i \le n$, edge $(a_i,c')$ is present only if $S_a[i]=1$. $G$ has $\Theta(n)$ vertices and constant undirected diameter. Alice controls vertices $a_i,c,c',b$ and Bob controls vertices $p,q$. The cut separating their partitions consists of the two edges $(c,p)$ and $(b,q)$.
    Note that Alice and Bob can construct their sections of the graph with only the bits known to them, and the cut edges are fixed.

    First, we show that set disjointness can be determined by computing 2-APSiSP in $G$. Note that $d(a_i, q)=3$, either using the path through vertex $c$, or using the path through $c'$ when edge $(a_i,c')$ is present is $G$. If edge $(a_i,c')$ is present, the 2-SiSP distance is the same $d_2(a_i,q)=3$. If the edge is not present, $d_2(a_i,q) \ge 5$ using a path segment of the form $a_i$-$c$-$a_j$-$c'$. Since only 2 edges cross the cut separating vertices controlled by Alice and Bob, they can simulate a 2-APSiSP algorithm taking $R(n)$ rounds using $O(R(n)\cdot \log n)$ bits. After computing all 2-APSiSP distances, Bob can identify all bits of $S_a$ using the distances $d_2(a_i,q)$ computed at $q$. Then, Bob can readily evaluate set disjointness of $S_a,S_b$ and forward the 1-bit result to Alice in one round. So, the communication lower bound for set disjointness~\cite{Bar-YossefJKS04} gives an $\Omega(n/\log n)$ lower bound for 2-APSiSP. We also claim that APSP distances between vertices controlled by Alice and Bob are the same regardless of bits $S_a,S_b$. Only distances involving $a_i$ could be affected by the input bits, and regardless of the presence of edge $(a_i,c')$ we have $d(a_i,p)=2$ and $d(a_i,q)=3$. 

    We can generalize this lower bound to hold for $(3-\epsilon)$-approximation algorithms (for any constant $\epsilon>0$) by replacing the edges $(a_i,c)$ and $(a_i,c')$ with paths of length $k$, for a large constant $k$. The graph has $\Theta(nk)$ vertices, and we have $d_2(a_i,q)=k+2$. If edge $(a_i,c')$ is not present, then $d_2(a_i,q) \ge 3k+2$, allowing us to determine set disjointness even if only a $(3-\epsilon)$-approximation of $d_2(a_i,q)$ is known (by choosing $k=\Theta(1/\epsilon)$). For directed graphs, we can direct all edges from left to right, so there is no 2-SiSP path from $a_i$ to $q$ when the edge $(a_i,c')$ is not present and $d_2(a_i,q)=\infty$. In this case, the lower bound holds for an arbitrarily large approximation factor.
\end{proof}

\section{Conclusion and Open Problems}
\label{sec:conclusion}

In this paper, we present CONGEST upper and lower bounds for DSO and 2-APSiSP. Our bounds for 2-APSiSP are almost optimal, up to polylog factors, for even undirected unweighted graphs. There still remains a gap between upper and lower bounds for directed weighted DSO, leading to the following open problems:

\begin{itemize}
    \item For a DSO that can answer a batch of $k$ queries in $O(k+D)$ rounds, we present a lower bound of $\tilde{\Omega}(n)$ preprocessing rounds, and an upper bound that takes $\tilde{O}(n^{3/2})$ rounds. Can we bridge the gap between these bounds?
    \item For a DSO that takes $o\left(k\frac{\sqrt{n}}{\log n}+D\right)$ rounds to answer $k$ queries, we show a lower bound of $\tilde{\Omega}(\sqrt{n})$ preprocessing rounds. We show an upper bound that takes $\tilde{O}(n)$ preprocessing rounds and answers $k$ queries in $\tilde{O}(k\sqrt{n}+D)$. It is unlikely the lower bound can be improved for algorithms taking $o(k\sqrt{n}+D)$ query rounds, as this would require an improved CONGEST SSSP lower bound. However, it remains an open question whether the $\tilde{O}(n)$ preprocessing is optimal or can be improved to sub-linear rounds.
    \item Can we achieve a tradeoff between preprocessing rounds and query response rounds, such as $\tilde{O}(n^{3/2-c})$ preprocessing rounds to answer $k$ queries in $\tilde{O}(kn^{c}+D)$ rounds, for $0 < c < 1/2$? We achieve this for $c=0$ and $c=1/2$, but our techniques do not immediately extend to arbitrary $c$.
    \item For seb-DSO, we have shown that with no preprocessing the algorithm using $k$-SSP is optimal. Can we improve on this bound using preprocessing beyond what we have obtained for general DSO?
    \item Our algorithms are for directed weighted graphs (apart from straightforward upper bounds for undirected seb-DSO without preprocessing based on $k$-SSP~\cite{ElkinN19,LenzenPP19,HoangPDGYPR19}). Can we improve on our general DSO results for undirected or unweighted graphs? Can we do better if only approximate DSO distances are required?
    \item Recently, we have obtained the first non-trivial results for DSO on the PRAM including a work-efficient algorithm with $\tilde{O}(n^{1/2+o(1)})$ parallel time and $\tilde{O}(mn)$ work, an algorithm with $\tilde{O}(1)$ parallel time with $\tilde{O}(n^3)$ work, and an algorithm with a work-time tradeoff that can achieve sub-$n^3$ work and sub-$\sqrt{n}$ time~\cite{ManoharanR25pram}. Further improvements for the PRAM model and the broader study of DSO in other parallel settings 
    are topics for further investigation.
\end{itemize}

\bibliographystyle{plainurl}
\bibliography{references} 

\end{document}